\documentclass[a4paper,11pt]{amsart}
\usepackage[utf8]{inputenc}
\usepackage[english]{babel}
\usepackage{amsaddr}
\usepackage{mathrsfs}
\usepackage{amssymb}
\usepackage{hyperref}
\usepackage{graphicx,color}
\usepackage{amsthm}
\usepackage[dvipsnames]{xcolor}
\usepackage[normalem]{ulem}
\usepackage{cancel}
\usepackage{float}
\usepackage{enumitem}
\usepackage{amsmath}
\usepackage{enumitem}
\usepackage{cite}
\newtheorem{theorem}{Theorem}

\newtheorem{lemma}{Lemma}

\theoremstyle{definition}
\theoremstyle{plain}
\theoremstyle{remark}

\newcommand{\R}{\mathrm{R}}

\renewcommand{\H}{\mathcal{H}}

\newcommand{\p}{\partial}
\newcommand{\Keywords}[1]{%
  \par\noindent\textbf{Keywords:} #1
}

\begin{document}
\vspace*{.5cm}

\title{Parameter estimation for the mixed fractional Merton jump diffusion model with EM algorithm}
\date{\today}
\author[
Agboeke, Almani, Gasbarra, Shokrollahi, and Sottinen
]{
Joy Agboeke, \\
Hamidreza Maleki Almani, \\
Dario Gasbarra, \\
Foad Shokrollahi, \\
Tommi Sottinen
}
\address{Department of Mathematics and Statistics,\\ School of Technology and Innovations, University of Vaasa,\\ P.O. Box 700, FIN-65101 Vaasa, FINLAND}

\begin{abstract}
This paper proposes an Expectation--Maximization algorithm with Metropolis--Hastings sampling for parameter estimation in a Mixed Fractional Merton Jump Diffusion model. The model combines fractional Brownian motion to capture long-range dependence with a compound Poisson jump process to describe abrupt movements in financial returns. The latent jump process is inferred during the E-step using Markov Chain Monte Carlo sampling, while the M-step updates the model parameters by maximizing the expected complete-data likelihood. The consistency and asymptotic normality of the proposed estimator are established under suitable regularity conditions. The methodology is applied to the Helsinki Stock Index (OMXH25). The proposed estimation framework provides a reliable and computationally efficient approach for modeling financial time series exhibiting both long-memory dependence and jump behavior.
\end{abstract}
\Keywords: {Fractional Brownian motion, Jump diffusion, EM algorithm, MCMC, Metropolis--Hastings, Parameter estimation, Financial time series, Long memory,Merton model, Monte Carlo simulation}

\subjclass[2020]{Primary 91G30, 91G20; Secondary 62F10, 60G22, 60G51, 60J75}

\maketitle

\vspace*{1cm}
\section{Introduction}

Merton \cite{merton1976option} extended the classical geometric Brownian motion model by incorporating jump components to account for sudden discontinuities in asset prices. The resulting jump-diffusion model is given by
\begin{equation}
\label{eq:1}
S_t = S_0 \exp\left(\mu t +\sigma W + J_t\right).
\end{equation}
where $W$ is a standard Brownian motion and $J_t = \sum_{i=1}^{N_t} Z_i$ is a compound Poisson process with intensity $\lambda > 0$. This model captures excess kurtosis and heavy tails observed in financial returns, which are not explained by the Black--Scholes framework \cite{Merton1976}.

Empirical evidence suggests that financial time series exhibit long-range dependence and persistence Volatility see\cite{longmemory2002,sadique2001}. To model such behavior, Mandelbrot and Van Ness \cite{Mandelbrot1968} introduced fractional Brownian motion (fBm) $B_t^H$, a centered Gaussian process with covariance function
\begin{equation}
\label{eq:2}
\mathbb{E}[B_t^H B_s^H] = \frac{1}{2}\left(t^{2H} + s^{2H} - |t-s|^{2H}\right),
\end{equation}
where $H \in (0,1)$ is called the Hurst parameter. fBm exhibits short memory when $H < 1/2$, long memory when $H >1/2$ and reduces to the standard Brownian motion when $H = 1/2$ \cite{Mandelbrot1968}. fBm has attracted considerable attentions in the literature due to its ability to capture long-range dependence and self-similar structures \cite{Franzke2020, Beran1994,Samorodnitsky2006}. Several studies have also investigated its theoretical properties and applications in stochastic modeling and finance \cite{hu2003,duncan2000,mishura2008}. However, for $H \neq 1/2$, fBm is not a semi-martingale \cite{sottinen2001}, limiting its direct application in financial modeling.

To overcome this limitation, Cheridito \cite{Cheridito2001} introduced the mixed fractional Brownian motion (mfBm) which is defined as a linear combination of a standard Brownian motion and an independent fBm, which preserves the semi-martingale property under suitable conditions. mfBm, its structural properties, together with parameter estimation methodologies, have been extensively investigated in the literature \cite{hu2011,xiao2015,masnita2010,xiao2011,sun2018,Beran1994,fay2009,palma2007}. These studies demonstrate that mixed fractional processes are capable of capturing both short-term fluctuations and long-range dependence commonly observed in financial time series.

Despite these developments, empirical applications have highlighted certain limitations of pure mfBm when modeling abrupt market movements. In particular \cite{dufitinema2020} applied mixed fractional Brownian motion to Finnish stock market data and observed that the discontinuities and extreme price movements remain insufficiently captured without incorporating jump dynamics.

Motivated by these findings, the present study extends the mixed fractional framework by incorporating jump components into the price dynamics. This approach inputs the price memory into the Merton model~\ref{eq:1} by adding an fBm component to its exponential form. Combining diffusion, long memory, and jump components which leads to the mixed fractional Merton jump diffusion (MFMJD) model:
\begin{equation}
\label{eq:3}
S_t = S_0 \exp\left(\mu t +\sigma W + \nu B_t^{H} + J_t\right).
\end{equation}
where $\mu \in \mathbb{R}$ is the drift parameter, $\sigma^2 $ is the diffusion volatility, $\nu^2 $ is the fractional volatility coefficient, $W$ is a standard Brownian motion, $B_t^H$ is an fBm with Hurst parameter $H \in (1/2,1)$, and $J_t = \sum_{i=1}^{N_t} Z_i$ is a compound Poisson process with intensity $\lambda > 0$, where $N_t \sim \text{Poisson}(\lambda \Delta t)$ and the jumps $\{Z_i\}$ are i.i.d. 
${\mathcal N}( \mu_J,\sigma_J^2)$. For analytical construction and application of ~\ref{eq:1}, see \cite{merton1976option}

In discrete time, the log-return process $Y_t = \Delta\log S_t $ can be expressed as
\begin{equation*}
Y_t = \mu \Delta t + J_t + \varepsilon_t,
\end{equation*}
where $N_t \sim \text{Poisson}(\lambda \Delta t)$ is the latent jump count process, and $\varepsilon_t$ is a Gaussian process with covariance structure induced by the mixed fractional component.

The presence of latent jump variables renders the likelihood function analytically intractable. To address this, we use the the Expectation-Maximization (EM) algorithm for parameter estimation in the presence of missing data introduced by Dempster, Laird, and Rubin \cite{Dempster1977} . In this framework, the jump process $\{N_t\}$ is treated as unobserved data. Also, the conditional distribution of $N_t$ given observations is not available in closed form. To overcome this difficulty, simulation-based methods are employed. Recall that, the Metropolis algorithm \cite{Metropolis1953}, later generalized by Hastings \cite{Hastings1970}, provides a mechanism for sampling from complex posterior distributions. The combination of EM with Metropolis-Hastings(MH) sampling, commonly referred as Monte Carlo EM, has been successfully applied in jump-diffusion and latent variable models \cite{Ball1985, Johannes2009}.

In this paper, we apply this framework to the mixed fractional Merton jump diffusion model. The E-step is approximated using Markov chain Monte Carlo (MCMC) methods, specifically the Metropolis--Hastings algorithm, to generate samples from the conditional distribution of the latent jump process. MCMC methods construct a Markov chain whose stationary distribution is the target distribution, allowing expectations with respect to an analytically intractable distribution to be approximated by empirical averages of simulated draws \cite{RobertCasella2004,BrooksGelmanJonesMeng2011}. In the present setting, this enables the unobserved jump process to be integrated into the E-step through Monte Carlo approximation, while the M-step updates the model parameters by maximizing the resulting expected complete-data log-likelihood. This approach enables joint estimation of the model parameters and latent jumps in the presence of both long memory and discontinuities, and motivates the parameter identification procedure presented in the next section.

The remainder of this paper is organized as follows: Section 2 presents the MFMJD together with the corresponding parameter identification framework. Developing the proposed EM-MH estimation methodology by deriving the complete-data likelihood, the score equations, and the EM updating scheme. In section 3,  the theoretical properties of the proposed estimator are then established through consistency and asymptotic normality results under appropriate regularity conditions. Section 4 describes the data and it's properties while Section 5 illustrates the application of the proposed methodology using real financial data. The paper concludes with a summary of the principal findings and suggestions for future research.

\section{Preliminaries and Methodology}
\subsection{Model Specification and Parameter Identification}\mbox{}\\

The following parameters are identified from equation ~\ref{eq:3} above: 
\begin{enumerate}[label=(\roman*)]
\item $\mu \in \mathbb{R}$ is the drift parameter.
\item $\sigma^2 > 0$ is the Brownian volatility,
\item $\nu > 0$ is the fractional volatility coefficient,
\item $H \in (1/2,1)$ controls long-range dependence,
\item $\lambda > 0$ is the jump intensity,
\item $\mu_J > 0$ is the mean of the jumps,
\item $\sigma^2_J > 0$ is the jump volatility coefficient,
\end{enumerate}
We model the return process as:
\begin{equation*}
	\mathbf{Y} = \mu \Delta t \mathbf{1} + \varepsilon,
\end{equation*}
where {$\mathbf{1}=(1,\ldots,1)^\top\in\R^n$, and} $\varepsilon$ captures both continuous and jump-induced variability.

The covariance of the Gaussian component is given by:
\begin{equation*}
	\Sigma = \sigma^2 \Delta t\, I + \nu^2 \Gamma(H),
\end{equation*}
where $\Gamma_H$ is the Toeplitz matrix induced covariance structure of fBm given by ~\ref{eq:2}:
\begin{equation*}
(\Gamma_H)_{ij} = \frac{1}{2}\left(|i-j+1|^{2H} + |i-j-1|^{2H} - 2|i-j|^{2H}\right)\Delta t^{2H}.
\end{equation*}
We introduce a state-dependent conditional covariance structure driven by latent jumps:
\begin{equation*}
	\Sigma = \Sigma + \sigma_J^2 \,\mathrm{diag}(\mathbf{N}),
\end{equation*}
where $\mathbf{N} = (N_1,\dots,N_n)^\top$ represents latent jump counts,{and $N_i$ is the number of jumps on $(t_{i-1}, t_i]$.}
Thus, conditional on $\mathbf{N}$, the return vector follows:
\begin{equation*}
	\mathbf{Y} \mid \mathbf{N} \sim \mathcal{N}(\mu \Delta t \mathbf{1}+\mu_J N, \Sigma).
\end{equation*}
This specification implies that jump activity induces local increases in volatility. Under this formulation, the parameter vector
\begin{equation}
 \label{eq:4}   
\theta = (\mu,\lambda, \sigma^2, \nu^2, H,\sigma^2_J, \mu_J))^\top
\end{equation}
is identifiable provided that $\Sigma$ is positive definite.

The complete-data likelihood is:
\begin{equation*}
	L(\theta; \mathbf{Y}, \mathbf{N}) = p(\mathbf{Y} \mid \mathbf{N}; \theta)\, p(\mathbf{N}; \theta),
\end{equation*}
where for the gaussian component 
\begin{equation*}
	p(\mathbf{Y} \mid \mathbf{N}) =
	\frac{1}{(2\pi)^{n/2} |\Sigma|^{1/2}}
	\exp\left(
	-\frac{1}{2}
	(\mathbf{Y} - \mu \Delta t \mathbf{1} -\mu_J N)^\top
	\Sigma^{-1}
	(\mathbf{Y} - \mu \Delta t \mathbf{1} -\mu_J N)
	\right).
\end{equation*}
and for the jump part
\begin{equation*}
	p(\mathbf{N}) =
	\prod_{i=1}^n
	\frac{(\lambda \Delta t)^{N_i}}{N_i!}
	e^{-\lambda \Delta t}.
\end{equation*}
The complete-data log-likelihood is:
\begin{equation*}
	\begin{aligned}
		\ell_n(\theta; Y,N)
		&=
		-\frac{1}{2}
		(\mathbf{Y} - \mu \Delta t \mathbf{1} - \mu_J \mathbf{N} )^\top
		\Sigma^{-1}
		(\mathbf{Y} - \mu \Delta t \mathbf{1} - \mu_J \mathbf{N}) \\
		&\quad
		-\frac{1}{2} \log \det(\Sigma) \\
		&\quad
		+ \sum_{i=1}^n
		\left[
		N_i \log(\lambda \Delta t)
		- \lambda \Delta t
		- \log(N_i!)
		\right].
	\end{aligned}
\end{equation*}
We need to maximize w.r.t.
$\theta$
\begin{align*}
E_{\theta_0}(\ell_n(\theta;Y,N)| Y)
\simeq \frac{1}{M}
\sum_{k=1}^M \ell_n(\theta;Y,N^{(k)})
\end{align*}
where $(N^{(k)}: k=1,\dots,M)$
is the MCMC conditioned on $Y$.

The score vector is 
$\mathbf{S}(\theta)=\nabla_\theta E(\ell_n|Y)$, i.e.,
	\begin{gather*}
		\mathbf{S}=(S_1,\ldots,S_6)^\top\\ 
		= (\p_\lambda,\p_\mu,\p_{\sigma^2},\p_{\nu^2},\p_H,\p_{\sigma_J^2},\p_{\mu_J})\ell_n.\notag
    \end{gather*} 
Let $\mathbf{r(N)} = \mathbf{Y} - \mu \Delta t \mathbf{1} - \mu_J\mathbf{N}$.

For the jump Intensity $\lambda$
\begin{equation*}
	\frac{\partial E(\ell_n|Y)}{\partial \lambda}
	=
	\sum_{i=1}^n \left( \frac{E(N_i|Y)}{\lambda} - \Delta t \right),
\end{equation*}
leading to:
\begin{equation*}
	\hat{\lambda} = \frac{\sum_{i=1}^n E(N_i|Y)}{n \Delta t}.
\end{equation*}

For the drift Parameter $\mu$
\begin{equation*}
	\frac{\partial E(\ell_n|Y)}{\partial \mu}
	=
	(\Delta t \mathbf{1})^\top E\bigl(\Sigma^{-1} \mathbf{r(N)}
    |Y).
\end{equation*}
Setting this to zero yields:
\begin{equation*}
	\hat{\mu} =
	\frac{E\bigl(\mathbf{1}^\top \Sigma^{-1} (\mathbf{Y} -\mu_J\mathbf{N}) \big\vert Y\bigr)}
	{\Delta t\, E(\mathbf{1}^\top \Sigma^{-1} \mathbf{1}| Y)}.
\end{equation*}

For the jump mean $\mu_J$
\begin{equation*}
	\frac{\partial E(\ell_n|Y)}{\partial \mu_J}
	=
	 E(\mathbf{N}^\top \Sigma^{-1} \mathbf{r(N)}|Y).
\end{equation*}
Setting this to zero yields:
\begin{equation*}
	\hat{\mu}_J
	=
	\frac{E\bigl( \mathbf{N}^\top \Sigma^{-1} (\mathbf{Y} -\mu\Delta t\mathbf{1})
    \big\vert Y\bigr)
    }
	{E\bigl(\mathbf{N}^\top \Sigma^{-1} \mathbf{N}
    \vert Y)}.
\end{equation*}
For the jump variance $\sigma_J^2$, 
since it enters the covariance matrix $\Sigma$, the score function is:
\begin{align*}
	\frac{\partial E(\ell_n|Y)}{\partial \sigma_J^2}
	=
	-\frac{1}{2}\operatorname{tr}E\left(\Sigma^{-1} \mathrm{diag}(\mathbf{N}) \vert Y\right)\\
	+
\frac{1}{2} E\bigl(\mathbf{r(N)}^\top\Sigma^{-1}
\mathrm{diag}(\mathbf{N})\Sigma^{-1}\mathbf{r(N)}
\big\vert Y\bigr).
\end{align*}
This equation does not admit a closed-form solution and is handled implicitly via EM updates.

For the parameters of the gaussian part  $q \in \{\sigma^2, \nu^2, H\}$:
\begin{align*}
		\frac{\partial E(\ell_n| Y)}{\partial q}
		=
		-\frac{1}{2} \operatorname{tr}E(\Sigma^{-1}\vert  Y
        ) \partial_q \Sigma
        +
		\frac{1}{2}E \bigl( 
	\mathbf{r(N)}^\top
		\Sigma^{-1}
		(\partial_q \Sigma)
		\Sigma^{-1}
		\mathbf{r(N)}
        \big\vert Y\bigr).
\end{align*}
Due to the nonlinear dependence of $\Gamma(H)$, we estimate these parameters numerically.
Indeed, we have
	\begin{align*}
		\frac{\partial E(\ell_n| Y)}{\partial \sigma^2}
		&=
		\frac{\Delta t}{2}E\left[\mathbf{r(N)}^\top
		\Sigma^{-2}
		\mathbf{r(N)} - \operatorname{tr}(\Sigma^{-1})
        \big\vert Y
        \right],\\
		\frac{\partial E(\ell_n|Y)}{\partial \nu^2}
		&=
		\frac{1}{2}E\left[\mathbf{r(N)}^\top
		\Sigma^{-1}\Gamma(H)\Sigma^{-1}
		\mathbf{r(N)} - \operatorname{tr}\left(\Sigma^{-1}\Gamma(H)\right)
        \big\vert Y
        \right],\\
		\frac{\partial E(\ell_n|Y)}{\partial H}
		&=
		\frac{\nu^2}{2}E\left[\mathbf{r(N)}^\top
		\Sigma^{-1}\partial_H\Gamma(H)\Sigma^{-1}
		\mathbf{r(N)} - \operatorname{tr}\left(\Sigma^{-1}\partial_H\Gamma(H)\right)\big\vert Y\right].
	\end{align*}
\subsection{EM Algorithm with MH Sampling}

\paragraph{\textbf{E-step}}

We compute:
\begin{equation*}
	Q(\theta \mid \theta^{(k)})
	=
	\mathbb{E}_{\mathbf{N} \mid \mathbf{Y}, \theta^{(k)}}
	\left[
	\ell_n(\theta; \mathbf{Y}, \mathbf{N})
	\right].
\end{equation*}
Since the posterior
\begin{equation*}
	p(\mathbf{N} \mid \mathbf{Y})
	\propto
	p(\mathbf{Y} \mid \mathbf{N}) p(\mathbf{N})
\end{equation*}
is analytically intractable due to the covariance term $\Sigma$, we approximate it using a MH sampler.
\paragraph{\textbf{MH Sampling}}
We construct a Markov chain over $\mathbf{N}$ using local and block proposals:
\begin{enumerate}[label=(\roman*)]
	\item Local moves: $N_i \to N_i \pm 1$ with probability 1/2 and a deterministic proposal $0\mapsto 1$ when
$N_i=0$
	\item Block moves: simultaneous updates on randomly selected subsets
\end{enumerate}

The acceptance probability is:
\begin{equation*}
	\alpha = \min\left(1,
	\frac{p(\mathbf{Y} \mid \mathbf{N}^\prime)p(\mathbf{N}^\prime)}
	{p(\mathbf{Y} \mid \mathbf{N})p(\mathbf{N})}
	\right).
\end{equation*}
Samples are collected after burn-in and thinning, and the posterior expectation is approximated by:
\begin{equation*}
	\hat{\mathbf{N}} = \frac{1}{M} \sum_{m=1}^M \mathbf{N}^{(m)}.
\end{equation*}
For the M step: 
Given $\hat{\mathbf{N}}$, parameters are updated by:
\begin{enumerate}[label=(\roman*)]
	\item Closed-form updates  $
	\hat{\mu},\hat{\mu}_J,\hat{\lambda}$
	\item Numerical optimization:  $
	\hat{\sigma}^2, \hat{\nu}^2,\hat{H}, \hat{\sigma}_J^2 $
\end{enumerate}

As covariance criterion, the EM algorithm is iterated until:
\begin{equation*}
	\max_j \frac{|\theta_j^{(k+1)} -\theta_j^{(k)}|} {|\theta_j^{(k)}| }
	< \text{tol}.
\end{equation*}
Where tol is a small number. In this study, $tol = 0.0001$ The algorithm terminates when relative parameter changes fall below a prescribed tolerance $(0.0001)$ .

This formulation introduces a jump-driven heteroskedastic structure through the covariance matrix $\Sigma$. Unlike classical Merton models where jumps affect only the mean, this specification allows jump events to directly influence volatility, thereby capturing volatility clustering and extreme event amplification.

The EM--MH framework enables efficient estimation in the presence of latent jump processes and covariance structures, combining analytical updates with stochastic simulation.
\section{Asymptotic Properties}
For the parameter vector ~\ref{eq:4}, we consider the long-span asymptotic regime
$n\to\infty$, with $\Delta t>0$ fixed

Let
$
Y=(Y_1,\ldots,Y_n)^\top
$
denote the observed return vector and
$
N=(N_1,\ldots,N_n)^\top
$
the latent jump-count vector. The MFMJD model is rewritten as
$$
Y_t = \mu\Delta t+X_t+\mu_JN_t +\sigma_J\sqrt{N_t}Z_t,
\qquad t=1,\ldots,n,
$$
where
$X = \sigma W + \nu B^H$, 
$
N_t\sim\operatorname{Poisson}(\lambda\Delta t), 
 Z_t\stackrel{\mathrm{iid}}{\sim}N(0,1),
$
and the jump process, the Gaussian innovations ${Z_t}$, and the fractional Gaussian noise component are mutually independent. Conditional on $N$, the observations are Gaussian:
$$
Y\mid N \sim N\left(\mu\Delta t\,\mathbf{1}_n+\mu_JN,\Sigma_n(\theta)\right),
$$
where
$$
\Sigma_n(\theta) = \sigma^2\Delta t I_n
+
\nu^2\Gamma_n(H) + \sigma_J^2\operatorname{Diag}(N).
$$
Here $\Gamma_n(H)$ denotes the covariance matrix of the fractional Gaussian noise component.

The asymptotic results below establish consistency and asymptotic normality of the proposed estimator. The long-memory Gaussian component is treated within the framework of strongly dependent stationary Gaussian processes developed by Fox and Taqqu~\cite{FoxTaqqu1986} and Dahlhaus~\cite{Dahlhaus1989}; see also chapter three of Beran~\cite{Beran1994}. The consistency follows from the general theory of
M-estimators; see Theorem 5.7 and the Argmax Theorem
(Theorem 5.9) of \cite{VanDerVaart2000},  while the general asymptotic theory of statistical estimation is given by Ibragimov and Has'minskii Theorems I.5.1–I.5.2; I.10.1–I.10.2 ~\cite{IbragimovHasminskii2013}.

\subsection{Primitive Assumptions}

The following assumptions are imposed directly on the model and parameter space.

\paragraph{(A1) Compact parameter space and interiority.}
The parameter space is

$$
\Theta= \left\{ \theta:
\begin{array}{l}
\mu\in[\underline{\mu},\overline{\mu}],
\quad
\mu_J\in[\underline{\mu}_J,\overline{\mu}_J],
\\[1mm]
\lambda\in[\underline{\lambda},\overline{\lambda}],
\\[1mm]
\sigma^2\in[\underline{\sigma}^2,\overline{\sigma}^2],
\\[1mm]
\nu^2\in[\underline{\nu}^2,\overline{\nu}^2],
\\[1mm]
\sigma_J^2\in
[\underline{\sigma}_J^2,\overline{\sigma}_J^2],
\\[1mm]
H\in
\left[
\dfrac12+\varepsilon_H,
1-\varepsilon_H
\right]
\end{array}
\right\},
$$

where all lower bounds are strictly positive and
$
0<\varepsilon_H<\frac12.
$
The true parameter satisfies
$
\theta_0\in\operatorname{int}(\Theta).
$
Consequently, $\Theta$ is compact.

\paragraph{(A2) Smoothness and integrability.}
For almost every $(Y,N)$, the complete-data log-likelihood
$$
\ell_c(\theta;Y,N)
=
\log p_\theta(Y,N)
$$
is twice continuously differentiable with respect to $\theta$ on $\Theta$. The first- and second-order derivatives are measurable, and there exists an integrable envelope dominating the likelihood and the derivatives required in the arguments below.

\paragraph{(A3) Long-memory spectral regularity.}
The spectral density of the fractional Gaussian noise component satisfies
$$
f_H(\omega)
\sim
C(H)|\omega|^{1-2H},
\qquad
\omega\to0,
$$
where
$
C(H)>0, and  \ H\in
\left[
\frac12+\varepsilon_H,
1-\varepsilon_H
\right].
$

Equivalently,
$$
\lim_{\omega\to0}
\frac{f_H(\omega)}
{C(H)|\omega|^{1-2H}}
=1.
$$
Thus, the Gaussian component has long-range dependence because $H>1/2$. The regularity conditions required for likelihood-based asymptotic analysis of strongly dependent Gaussian processes are assumed to hold; see Theorem 1 in Fox and Taqqu~\cite{FoxTaqqu1986}, see also Dahlhaus~\cite{Dahlhaus1989}, and Beran~\cite{Beran1994}.

\paragraph{(A4) Jump-process regularity.}
The jump counts are iid and satisfy
$
N_t\sim\operatorname{Poisson}(\lambda\Delta t),
$
and
$
Z_t\stackrel{\mathrm{iid}}{\sim}N(0,1).
$
The sequences ${N_t}$ and ${Z_t}$ are independent of each other and of the fGN process. Consequently, the jump component possesses finite moments of every order.

\paragraph{(A5) Monte Carlo approximation.}
Let
$
N^{(1)},\ldots,N^{(M_n)}
$
denote the post-burn-in samples generated by the MH algorithm from the conditional distribution of $N$ given $Y$. Define the exact normalized EM criterion by
$$
Q_n^{\mathrm{EM}}(\theta)
= \frac1n E_{\theta_0}\left[\ell_c(\theta;Y,N)\mid Y
\right],
$$
and its Monte Carlo approximation by
$$
Q_{n,M_n}^{\mathrm{EM}}(\theta)
= \frac{1}{nM_n}\sum_{m=1}^{M_n}\ell_c
\left(\theta;Y,N^{(m)}\right).
$$
The MH chain is assumed to be sufficiently ergodic, and the number of Monte Carlo draws $M_n$ increases with $n$ such that
$$
\sup_{\theta\in\Theta}
\left|
Q_{n,M_n}^{\mathrm{EM}}(\theta)
-
Q_n^{\mathrm{EM}}(\theta)
\right|
=
o_p(n^{-1/2}).
$$
Thus, the Monte Carlo approximation error is asymptotically negligible relative to the statistical $n^{-1/2}$ scale.

\subsection{Analytic Results}

\begin{lemma}[Positive Definiteness]
Under Assumptions (A1) and (A4), the conditional covariance matrix
$\Sigma_n(\theta)$ is positive definite for every
$n$ and every $\theta\in\Theta$. Moreover,
$$
\lambda_{\min}\{\Sigma_n(\theta)\}
\geq
\underline{\sigma}^2\Delta t>0.
$$
\end{lemma}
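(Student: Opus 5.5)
The plan is to write $\Sigma_n(\theta)$ as a sum of three symmetric matrices and bound each quadratic form separately:
\[
\Sigma_n(\theta)=\sigma^2\Delta t\, I_n+\nu^2\Gamma_n(H)+\sigma_J^2\operatorname{Diag}(N).
\]
For an arbitrary vector $x\in\mathbb{R}^n$, I would show that the second and third terms contribute nonnegative amounts to $x^\top\Sigma_n(\theta)x$. Then
\[
x^\top\Sigma_n(\theta)x\ge \sigma^2\Delta t\,\|x\|^2\ge \underline{\sigma}^2\Delta t\,\|x\|^2 ,
\]
where the last inequality uses (A1), i.e.\ $\sigma^2\ge\underline{\sigma}^2>0$. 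Taking the infimum over unit vectors $x$ (Rayleigh quotient) gives $\lambda_{\min}\{\Sigma_n(\theta)\}\ge\underline{\sigma}^2\Delta t>0$. Positive definiteness follows at once, for every $n$ and every $\theta\in\Theta$.

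The fractional term is handled as follows. $\Gamma_n(H)$ is by definition the covariance matrix of the increments $(B^H_{t_i}-B^H_{t_{i-1}})_{i=1}^n$ of fBm, computed from \eqref{eq:2}. For any $x$ we therefore have
\[
x^\top\Gamma_n(H)x=\operatorname{Var}\Big(\sum_i x_i\,(B^H_{t_i}-B^H_{t_{i-1}})\Big)\ge 0 .
\]
Since $\nu^2>0$, this gives $\nu^2\,x^\top\Gamma_n(H)x\ge0$. This works for every $H\in(0,1)$, so in particular on the range allowed by (A1). Alternatively, one could use the spectral representation $x^\top\Gamma_n(H)x=\int|\sum_j x_je^{ij\omega}|^2 f_H(\omega)\,d\omega$ with $f_H\ge0$, but the covariance argument is more elementary and avoids invoking (A3).

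The jump term is the easiest. By (A4) the counts satisfy $N_i\in\{0,1,2,\dots\}$, so $\operatorname{Diag}(N)$ is diagonal with nonnegative entries. Hence $\sigma_J^2\,x^\top\operatorname{Diag}(N)x=\sigma_J^2\sum_i N_i x_i^2\ge0$, which holds pathwise for every realization of $N$. The bound is therefore uniform in the latent variable, not just in expectation.

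I do not expect a real obstacle here. The only point needing care is to justify that $\Gamma_n(H)$ is positive semidefinite, not merely symmetric Toeplitz. I would settle this by identifying it explicitly as a Gram/covariance matrix of the Gaussian vector of fBm increments, so its quadratic form is a variance. I would also note that the resulting lower bound is independent of $n$, $N$, $H$, $\nu^2$, $\mu$, $\mu_J$, $\lambda$ and $\sigma_J^2$, which is what makes it useful uniformly later on.
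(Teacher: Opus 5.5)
Your proof is correct and matches the paper's argument: expand $v^\top\Sigma_n(\theta)v$ into the three quadratic forms, discard the two positive semidefinite contributions ($\nu^2\Gamma_n(H)$ and $\sigma_J^2\operatorname{Diag}(N)$ with $N_t\ge 0$), and use $\sigma^2\ge\underline{\sigma}^2$ from (A1). The paper simply asserts that $\Gamma_n(H)$ is positive semidefinite, and your identification of it as the covariance matrix of fBm increments, whose quadratic form is a variance, supplies that justification.
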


\begin{proof}
For any nonzero vector $v\in\mathbb{R}^n$,
$$
v^\top\Sigma_n(\theta)v
=\sigma^2\Delta t\|v\|^2
+ \nu^2v^\top\Gamma_n(H)v
+\sigma_J^2
v^\top\operatorname{Diag}(N)v.
$$
The fractional Gaussian noise covariance matrix $\Gamma_n(H)$ is positive semidefinite, and
$\operatorname{Diag}(N)$ is positive semidefinite because
$N_t\geq0$. Hence,
$$
v^\top\Sigma_n(\theta)v
\geq
\sigma^2\Delta t\|v\|^2.
$$
By (A1), $\sigma^2\geq\underline{\sigma}^2>0.$ Therefore, $$ v^\top\Sigma_n(\theta)v
\geq
\underline{\sigma}^2\Delta t\|v\|^2>0.
$$

Thus, $\Sigma_n(\theta)\succ0$
and  $
\lambda_{\min}\{\Sigma_n(\theta)\}
\geq
\underline{\sigma}^2\Delta t.
$
Notice that no uniform upper bound on
$\lambda_{\max}{\Sigma_n(\theta)}$ is required. In particular, the Poisson diagonal component may produce eigenvalues that increase with $n$.
\end{proof}

\begin{lemma}[Stationarity and Ergodicity]
Under Assumptions (A3) and (A4), the observed process
${Y_t}$ is stationary and ergodic.
\end{lemma}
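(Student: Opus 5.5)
We write $Y_t=\mu\Delta t+X_t+U_t$ with $U_t:=\mu_JN_t+\sigma_J\sqrt{N_t}Z_t$, and treat the two components separately. The ergodicity argument will not use any mixing property of the long-memory part. Instead it rests on two facts: a stationary Gaussian process with absolutely continuous spectral measure is mixing, and the product of a mixing system with an independent i.i.d. (Bernoulli) system is again mixing.

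\emph{Stationarity.} The increments $X_t=\sigma(W_{t\Delta t}-W_{(t-1)\Delta t})+\nu(B^H_{t\Delta t}-B^H_{(t-1)\Delta t})$ form a centered Gaussian sequence. Its covariance $\sigma^2\Delta t\,\delta_{ij}+\nu^2(\Gamma_H)_{ij}$ depends only on $i-j$, because Brownian and fractional Brownian increments are stationary. Hence $\{X_t\}$ is strictly stationary.

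By (A4), the pairs $(N_t,Z_t)$ are i.i.d., so $\{U_t\}$ is i.i.d. It is also independent of $\{X_t\}$. The joint sequence $\{(X_t,N_t,Z_t)\}$ is therefore strictly stationary: every finite-dimensional law factorizes into a shift-invariant Gaussian law times a product law. Since $Y_t=\phi(X_t,N_t,Z_t)$ for the fixed measurable map $\phi(x,n,z)=\mu\Delta t+x+\mu_Jn+\sigma_J\sqrt{n}\,z$, the process $\{Y_t\}$ is strictly stationary.

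\emph{Ergodicity.} The Gaussian part is handled through its spectral density. By (A3), $\{X_t\}$ has spectral density $\sigma^2\Delta t/(2\pi)+\nu^2 f_H(\omega)$. This is integrable on $[-\pi,\pi]$ since $1-2H>-1$. The spectral measure is therefore absolutely continuous, so the Riemann--Lebesgue lemma gives $\gamma_X(k)\to0$; directly, $\gamma_X(k)\sim \nu^2H(2H-1)\Delta t^{2H}k^{2H-2}\to0$. For Gaussian processes, $\gamma_X(k)\to0$ is equivalent to strong mixing in the ergodic-theoretic sense (Maruyama's theorem), so $\{X_t\}$ is mixing. The i.i.d. sequence $\{(N_t,Z_t)\}$ is a Bernoulli shift, hence mixing.

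It remains to show that $\{(X_t,N_t,Z_t)\}$ is mixing under the product shift. Mixing of the product only needs to be checked on cylinder sets. There, by independence, $P(A_1\times B_1\cap T^{-k}(A_2\times B_2))=P(A_1\cap T^{-k}A_2)\,P(B_1\cap T^{-k}B_2)\to P(A_1)P(A_2)P(B_1)P(B_2)$. A standard $\pi$--$\lambda$/approximation argument extends this to the full product $\sigma$-field. Mixing implies ergodicity, and $\{Y_t\}$ is a factor of this system: it is obtained through the time-invariant coordinatewise map $\phi$, so its invariant sets pull back to invariant sets of the joint process. Hence $\{Y_t\}$ is ergodic.

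The main obstacle is the ergodicity of the product. Ergodicity alone is not preserved under products, which is why we pass through mixing (weak mixing of one factor would suffice). A second point to flag is that long memory does not destroy ergodicity: $H<1$ still gives a summable-to-zero covariance $\gamma_X(k)\to0$ and an integrable spectral density, even though the covariances are not absolutely summable. The upper bound $H\le 1-\varepsilon_H$ from (A1) is used here, through (A3).
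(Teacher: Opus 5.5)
Your proof is correct and has the same skeleton as the paper's. Both show that the driving sequence $\{(X_t,N_t,Z_t)\}$ is stationary and ergodic, then pass to $Y_t$ as a time-invariant coordinatewise function, i.e.\ a factor. Where you differ is the justification of ergodicity, and there your route is the sound one.

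The paper makes two claims. First, it says fractional Gaussian noise is ergodic ``being a nondegenerate stationary Gaussian process''. Second, it says mutual independence of the ergodic components makes the joint process ergodic. Neither holds as a general principle:
\begin{itemize}
\item A stationary Gaussian sequence is ergodic if and only if its spectral measure has no atoms. So white noise plus an independent Gaussian sinusoid $\xi\cos(\omega t)+\eta\sin(\omega t)$ is nondegenerate (all covariance matrices positive definite) yet not ergodic.
\item Two independent copies of an ergodic irrational rotation form a non-ergodic pair, because the difference of their starting points is shift-invariant.
\end{itemize}

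The paper's conclusions are nevertheless true here, precisely because of the facts you make explicit. $X$ has an absolutely continuous spectral measure, so $\gamma_X(k)\to0$ and Maruyama's theorem gives mixing. The pair $(N_t,Z_t)$ is a Bernoulli shift, and products and factors of mixing systems are mixing. Your argument therefore fills the gap and even shows that $\{Y_t\}$ is mixing. The paper's version is shorter only because it leaves these facts unstated. As you remark, weak mixing of one factor suffices: since the i.i.d.\ jump factor is already weakly mixing, ergodicity of $X$ (continuous spectral measure) would be enough, so Maruyama's theorem is slightly more than needed.

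Two small wording points:
\begin{itemize}
\item For $H>1/2$ the covariances are not summable, so ``summable-to-zero'' should simply read $\gamma_X(k)\to0$.
\item Integrability of the spectral density is automatic, since it integrates to $\gamma_X(0)$. (A3) by itself only describes its behaviour near $\omega=0$, so the exponent check $1-2H>-1$ is not what secures it.
\end{itemize}
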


\begin{proof}
Fractional Gaussian noise is stationary for every
$H\in(0,1)$ and, being a nondegenerate stationary Gaussian process, is ergodic; see Beran~\cite{Beran1994}. The sequence
${N_t}$ is iid and therefore stationary and ergodic. The Gaussian jump-size innovations ${Z_t}$ are also iid and ergodic.

By (A4), the fractional Gaussian noise, jump counts, and jump-size innovations are mutually independent. Hence their joint process is stationary and ergodic. Since
$$
Y_t
=
\mu\Delta t
+
X_t
+
\mu_JN_t
+
\sigma_J\sqrt{N_t}Z_t
$$
is a measurable function of this joint process, the observed process ${Y_t}$ is stationary and ergodic.
\end{proof}

\begin{lemma}[Uniform Law of Large Numbers]
Let
\begin{align*}
Q_n(\theta)
&= \frac{1}{n}\ell_n(\theta), \\
\ell_n(\theta)
&= \log p_\theta(Y).
\end{align*}
define
$$
Q(\theta)
=
\lim_{n\to\infty}
E_{\theta_0}[Q_n(\theta)].
$$
Under Assumptions (A1)--(A4),
$$
\sup_{\theta\in\Theta}
|Q_n(\theta)-Q(\theta)|
\xrightarrow{P}0.
$$
\end{lemma}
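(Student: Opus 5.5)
The uniform convergence will follow from a pointwise law of large numbers together with a stochastic equicontinuity argument on the compact set $\Theta$; this is the standard route behind Theorem 5.7 of \cite{VanDerVaart2000}.

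\emph{Step 1: decomposition.} Write the observed log-likelihood as a mixture over the latent counts,
$$
\ell_n(\theta)=\log\sum_{N}p_\theta(Y\mid N)\,p_\theta(N).
$$
Because of the long-memory Gaussian component, this does not factor over $t$. I will therefore first treat the Gaussian part with $N$ fixed. The key ingredients are the Fox--Taqqu/Dahlhaus results invoked in (A3). Under these, the normalized Gaussian quantities
$$
\tfrac1n\log\det\Sigma_n(\theta)
\quad\text{and}\quad
\tfrac1n\,r^\top\Sigma_n(\theta)^{-1}r
$$
converge almost surely, pointwise in $\theta$. The limits are Whittle-type integrals of $\log f_\theta$ and of $f_{\theta_0}/f_\theta$, where $f_\theta$ is the spectral density of $\sigma^2\Delta t\,\delta+\nu^2\,$fGN. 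The additional diagonal term $\sigma_J^2\operatorname{Diag}(N)$ is iid and stationary, so the stationary and ergodic structure from the Stationarity and Ergodicity lemma lets me apply the ergodic theorem to the functionals of $(X_t,N_t,Z_t)$ that appear.

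\emph{Step 2: pointwise convergence.} For each fixed $\theta$ I will show that $Q_n(\theta)-E_{\theta_0}Q_n(\theta)\to0$ in probability and that $E_{\theta_0}Q_n(\theta)$ converges, so that $Q(\theta)$ is well defined. Two estimates make this work. First, by the Positive Definiteness lemma, $\lambda_{\min}\{\Sigma_n(\theta)\}\ge\underline\sigma^2\Delta t$, which gives
$$
\tfrac1n\,r^\top\Sigma_n^{-1}r\le(\underline\sigma^2\Delta t)^{-1}\tfrac1n\|r\|^2.
$$
Second, $\tfrac1n\log\det\Sigma_n$ is bounded below by $\log(\underline\sigma^2\Delta t)$ and bounded above by $\tfrac1n\sum_t\log(\Sigma_n)_{tt}$ via Hadamard's inequality. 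The right-hand side is a stationary average with finite moments by (A4). Together these give uniform integrability of $Q_n(\theta)$. Concentration can then come from a variance bound: $Y$ is a Gaussian vector given $N$, and a quadratic form $r^\top A r$ has variance $O(\|A\|_F^2)=O(n)$ after normalization, even for long memory with $H<1-\varepsilon_H$, in the Fox--Taqqu sense.

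\emph{Step 3: uniformity.} I will show that $\sup_\theta\|\nabla_\theta Q_n(\theta)\|=O_p(1)$. The score expressions derived in Section 2 involve only $\Sigma_n^{-1}$, $\partial_q\Sigma_n$ and $r$. The inverse is bounded in operator norm by the Positive Definiteness lemma. For the other factors, $\|\partial_H\Gamma_n(H)\|/n$ and $\|\Gamma_n(H)\|$ can be controlled uniformly on $H\in[\tfrac12+\varepsilon_H,1-\varepsilon_H]$ using the spectral bound $f_H(\omega)\lesssim|\omega|^{1-2H}|\log\omega|$. Combined with the integrable envelope in (A2), this gives a Lipschitz modulus $L_n=O_p(1)$. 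Compactness (A1) then lets me cover $\Theta$ by finitely many $\delta$-balls, and pointwise convergence at the centres upgrades to uniform convergence. Continuity of $Q$ follows from the same bound.

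\emph{Main obstacle.} The hard step is Step 2 for the actual mixture likelihood $\log p_\theta(Y)$, rather than for its conditional Gaussian pieces. The log-sum over $N$ is not additive, and the dependence runs through $X$, so no ergodic theorem applies to it directly. My first attempt would be a subadditivity/Kingman-type argument: block the sample and bound the cross-block covariance contribution. Failing that, I would sandwich $\ell_n(\theta)$ between $E[\ell_c\mid Y]$ and $E[\ell_c\mid Y]$ plus the conditional entropy of $N$ given $Y$. Each of these bounds is an average of stationary functionals to which Steps 1 and 2 apply, and the difference is controlled per observation by the Poisson entropy, which is finite under (A4).
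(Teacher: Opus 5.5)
There is a genuine gap, and it sits exactly at the step you flag yourself. Your Steps 1--2 (Whittle limits, the Hadamard bound, quadratic-form variances) all concern the conditional Gaussian likelihood $\log p_\theta(Y\mid N)$ at a fixed $N$. They do not concern $\ell_n(\theta)=\log\sum_N p_\theta(Y\mid N)p_\theta(N)$. So after Step 2 you have neither concentration nor convergence of $Q_n(\theta)$ itself, and neither fallback supplies it.

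\textbf{The entropy sandwich fails.} For every $\theta$ and $Y$ there is the exact identity $\ell_n(\theta)=E_\theta[\ell_c(\theta;Y,N)\mid Y]+\operatorname{Ent}_\theta(N\mid Y)$. So the distance between your two bounds is the conditional entropy of the latent counts, which is of order $n$, not $o(n)$. At $\theta_0$, for example, its expectation is at least $\operatorname{Ent}(N\mid Y,X)=n\,\operatorname{Ent}(N_1\mid J_1)$, where $J_1=\mu_JN_1+\sigma_J\sqrt{N_1}Z_1$. This equals $cn$ with $c>0$, because $N_1$ is not a function of $J_1$ when $\sigma_J^2>0$. A per-observation bound by the Poisson entropy therefore still leaves an $O(1)$ discrepancy after dividing by $n$. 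Proving that $\frac1n\operatorname{Ent}_\theta(N\mid Y)$ converges is as hard as the lemma itself.

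The bounds are also not averages of stationary functionals. $E_\theta[\ell_c\mid Y]$ integrates $r(N)^\top\Sigma_n(\theta)^{-1}r(N)$ and $\log\det\Sigma_n(\theta)$, both nonlocal in $N$, against the posterior of the whole vector $N$ given the whole vector $Y$, and that posterior changes with $n$.

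\textbf{The Kingman route is only named.} For hidden Markov models, subadditivity comes from the matrix-product form of the likelihood, which has no analogue here. Moreover, the long-memory cross-block terms $\log p_\theta(Y_{1:m+k})-\log p_\theta(Y_{1:m})-\log p_\theta(Y_{m+1:m+k})$ have no sign.

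\textbf{The paper does not fill this step either.} Its proof applies the ergodic theorem directly to $Q_n(\theta)$, citing the stationarity--ergodicity lemma, and then cites the (A2) envelope for stochastic equicontinuity. That is precisely the move you correctly say is unavailable.

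\textbf{What is needed} is an argument acting on the mixture itself. One possible route is a variance bound for $\frac1n\log p_\theta(Y)$ combining two pieces:
\begin{itemize}
\item a Gaussian Poincar\'e inequality in the fGN coordinates, using $\nabla_Y\log p_\theta(Y)=-E_\theta[\Sigma_n(\theta)^{-1}r(N)\mid Y]$ and the Positive Definiteness lemma;
\item Efron--Stein in the iid jump coordinates.
\end{itemize}
This must be paired with existence of $\lim_n E_{\theta_0}Q_n(\theta)$, which the statement tacitly presupposes when it defines $Q$.

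Two smaller points would also need repair.

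In Step 1, $\Sigma_n(\theta)$ contains $\sigma_J^2\operatorname{Diag}(N)$, so it is not Toeplitz. The Fox--Taqqu/Dahlhaus Whittle limits therefore do not cover $\frac1n\log\det\Sigma_n(\theta)$ or $\frac1n r^\top\Sigma_n(\theta)^{-1}r$, and these are not ergodic averages of functionals of $(X_t,N_t,Z_t)$. Even in the pure Toeplitz case, the $O(n)$ variance you quote requires $f_{\theta_0}/f_\theta$ to be square integrable near $0$, roughly $H>H_0-\tfrac14$; this is not automatic on $\Theta$.

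In Step 3, $\lambda_{\max}(\Gamma_n(H))\asymp n^{2H-1}$, so $\|\Gamma_n(H)\|$ is not uniformly bounded. You would need the ordering $\nu^2\Gamma_n(H)\preceq\Sigma_n(\theta)$ instead. Even then, $\nu^2\,\mathbf{1}^\top\partial_H\Gamma_n(H)\mathbf{1}/\mathbf{1}^\top\Sigma_n(\theta)\mathbf{1}\sim 2\log n$. So operator-norm bounds give a Lipschitz modulus of order $\log n$ rather than $O_p(1)$, unless the whole uniform control is taken from the (A2) envelope.
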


\begin{proof}
By Lemma 2, the observed process is stationary and ergodic. Therefore, for every fixed $\theta\in\Theta$, the ergodic theorem gives
$$
Q_n(\theta)
\xrightarrow{P}
Q(\theta).
$$
(A2) provides continuity of the likelihood criterion and the required derivatives in $\theta$, while compactness of $\Theta$ follows from (A1). The integrable envelope in (A2), together with the finite moments of the jump component in (A4), provides the domination required to control the criterion uniformly over $\Theta$.

Consequently, the family
$
\{Q_n(\theta):\theta\in\Theta\}
$
is stochastically equicontinuous. The standard uniform convergence argument for extremum estimators therefore yields
$$
\sup_{\theta\in\Theta}
|Q_n(\theta)-Q(\theta)|
\xrightarrow{P}0.
$$
This is the uniform law of large numbers required for consistency; see van der Vaart~\cite{VanDerVaart2000}.
\end{proof}

\begin{lemma}[Identification]
Suppose that two parameter vectors
$$
\theta_1,\theta_2\in\Theta
$$
generate the same marginal distribution for the observed vector $Y$. Then
$$
\theta_1=\theta_2.
$$
\end{lemma}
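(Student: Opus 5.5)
We will prove identification by showing that the parameters can be read off, one group at a time, from functionals of the law of $Y$. The natural tool is the characteristic function of a single increment together with the autocovariance function of the stationary sequence $\{Y_t\}$. The latter is available by Lemma 2, which says $\{Y_t\}$ is stationary.

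\textbf{Step 1: the marginal law of $Y_1$.} Conditioning on $N_1$ gives
$$
\log E e^{iuY_1} = iu\mu\Delta t-\tfrac12 u^2 s^2+\lambda\Delta t\left(e^{iu\mu_J-\frac12u^2\sigma_J^2}-1\right),
$$
where $s^2=\sigma^2\Delta t+\nu^2\Delta t^{2H}$ is the one-step variance of $X_t$. This is a Lévy--Khintchine type exponent: a Gaussian part plus a compound Poisson part. The Gaussian part contributes a polynomial term in $u$. The jump part is a nonconstant term that is bounded in $u$, with the convention that its Fourier coefficient structure is $\lambda\Delta t$ times the characteristic function of $\mathcal N(\mu_J,\sigma_J^2)$.

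Equality of the exponents for $\theta_1,\theta_2$ therefore separates these pieces. Comparing the polynomial growth as $u\to\infty$ gives equality of $s^2$ and of $\mu\Delta t$. This uses that the jump term is bounded, which holds since $\sigma_J^2\ge\underline\sigma_J^2>0$ by (A1). Once the polynomial part is removed, the remainder is
$$
\lambda\Delta t\, e^{iu\mu_J-u^2\sigma_J^2/2}-\lambda\Delta t .
$$
Letting $u\to\infty$ identifies $\lambda$ (using $\lambda>0$). The remaining function of $u$ then identifies $\mu_J$ and $\sigma_J^2$, since a Gaussian characteristic function determines its mean and variance. Because the exponents agree for all real $u$, this first step is clean.

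\textbf{Step 2: the autocovariances at lags $k\ge1$.} The jump terms are independent across $t$ and independent of $X$, so for $k\ge1$
$$
\operatorname{Cov}(Y_t,Y_{t+k})=\nu^2\Delta t^{2H}\rho_H(k),
\qquad
\rho_H(k)=\tfrac12\left(|k+1|^{2H}+|k-1|^{2H}-2k^{2H}\right).
$$
Consider the ratio $\gamma(2)/\gamma(1)=\rho_H(2)/\rho_H(1)$. Alternatively, use the large-lag asymptotics $\rho_H(k)\sim H(2H-1)k^{2H-2}$, which matches (A3): $\log\gamma(k)/\log k\to 2H-2$. Either route identifies $H$, provided $\nu^2>0$, which (A1) guarantees. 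The asymptotic route is the most robust because it avoids checking monotonicity of the ratio. Knowing $H$, the value $\gamma(1)$ then gives $\nu^2$. Finally, $\sigma^2$ follows from $s^2$ of Step 1.

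\textbf{Main obstacle.} The delicate point is Step 1: we must justify that the decomposition of the log-characteristic function into a quadratic polynomial plus a bounded compound-Poisson term is unique. This requires two facts. First, the characteristic function of $Y_1$ does not vanish, which holds because it is infinitely divisible, so a continuous logarithm exists. Second, the representation of the law as an infinitely divisible distribution is unique; the Lévy--Khintchine triplet determines the Gaussian variance, and the finite Lévy measure $\lambda\Delta t\,\mathcal N(\mu_J,\sigma_J^2)$ determines $\lambda$ and the jump law. A caveat is that the individual increments of $W$ and $B^H$ are not separately identified from $Y_1$ alone. This is why Step 2 must use the dependence structure at lags $k\ge1$, and why positivity of $\nu^2$ and $H>1/2$ (from (A1)) are essential.
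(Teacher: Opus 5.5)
Your proposal is correct and follows essentially the paper's route: you split the characteristic exponent into its Gaussian and compound-Poisson parts to identify $(\lambda,\mu_J,\sigma_J^2)$, then use the $k^{2H-2}$ decay of the lag-$k\ge1$ autocovariances to get $H$, their level to get $\nu^2$, and the one-step Gaussian variance to get $\sigma^2$. The differences are cosmetic: you work with the marginal of $Y_1$ and replace the appeal to L\'evy--Khintchine uniqueness by an explicit $u\to\infty$ argument, and you read $\mu$ off the linear term rather than from $E(Y_t)$; exactly as in the paper, your large-lag step implicitly uses the law of the whole sequence $\{Y_t\}$ rather than that of a fixed-length vector.
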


\begin{proof}
The proof uses the characteristic function of the observed vector and separates the Gaussian long-memory component from the independent jump component.

For a vector $u=(u_1,\ldots,u_n)^\top$, conditional independence of the jump component from the fractional Gaussian component gives
\begin{align*}
\varphi_Y(u)
=
\exp\left(i\mu\Delta t\sum_{t=1}^n u_t\frac12u^\top\left[\sigma^2\Delta t I_n+\nu^2\Gamma_n(H)
\right]u
\right) \\
\times\prod_{t=1}^n
\exp\left\{
\lambda\Delta t
\left[
\exp\left(
i\mu_Ju_t-\frac12\sigma_J^2u_t^2
\right)-1
\right]
\right\}
\end{align*}
Thus, the characteristic exponent has a Gaussian part and a compound-Poisson part.

The compound-Poisson factor is determined by the jump-size distribution $J_t\sim N(\mu_J,\sigma_J^2)$ and the intensity $\lambda$. The uniqueness of the Lévy--Khintchine decomposition implies that the Gaussian component and the jump measure are uniquely determined by the distribution of $Y$. Consequently, $ (\lambda,\mu_J,\sigma_J^2)$ are identified, provided $\lambda>0$ and $\sigma_J^2>0$ as imposed by (A1).
It remains to identify the parameters of the Gaussian long-memory component. After removing the jump contribution, its covariance structure is
$$
C_X(k) = \nu^2\Gamma_H(k), \qquad k\neq0,
$$
where $\Gamma_H(k)$ is the fractional Gaussian noise covariance function. For large $k$,
$$
\Gamma_H(k) \sim H(2H-1)k^{2H-2}.
$$
Hence the rate of decay of the nonzero-lag covariance uniquely determines $H$. Once $H$ is identified, its magnitude determines $\nu^2$. This is consistent with the low-frequency spectral characterization
$$
f_H(\omega)
\sim
C(H)|\omega|^{1-2H},
\qquad
\omega\to0,
$$
used in the long-memory identification theory; see Fox and Taqqu~\cite{FoxTaqqu1986} and Beran~\cite{Beran1994}.

Finally, after $H$, $\nu^2$, $\lambda$, $\mu_J$, and $\sigma_J^2$ have been identified, the marginal variance contribution of the jump component is known. The remaining diagonal Gaussian variance identifies $\sigma^2$ through
$$
\operatorname{Var}(Y_t)
=
\sigma^2\Delta t
+
\nu^2\Gamma_H(0)
+
\lambda(\sigma_J^2+\mu_J^2).
$$
The mean satisfies
$$
E(Y_t)
=
\mu\Delta t+\lambda\Delta t\,\mu_J,
$$
so that, after $\lambda$ and $\mu_J$ have been identified,
$$
\mu
=
\frac{E(Y_t)}{\Delta t}
-
\lambda\mu_J
$$
is uniquely determined.

Therefore all components of
$
\theta=
(\mu,\mu_J,\lambda,\sigma^2,\nu^2,H,\sigma_J^2)^\top
$
are uniquely determined by the distribution of $Y$. Hence
$$
p_{\theta_1}(Y)=p_{\theta_2}(Y)
\quad\Longrightarrow\quad
\theta_1=\theta_2.
$$
\end{proof}

\begin{lemma}[Score Central Limit Theorem]\mbox{}\\
Let
$$
S_n(\theta)
=
\nabla_\theta\ell_n(\theta)
$$
denote the observed-data score. Suppose that the Gaussian long-memory score component satisfies the central limit conditions of Fox and Taqqu~\cite{FoxTaqqu1986} and Dahlhaus~\cite{Dahlhaus1989}, and that the additional contribution generated by the latent compound-Poisson component has finite second moments and satisfies the corresponding joint central limit condition. 
Then
$$
\frac{1}{\sqrt n}S_n(\theta_0) \xrightarrow{d}
N\left(0,I_Y(\theta_0)\right),
$$
where $I_Y(\theta_0)$ is positive definite.
\end{lemma}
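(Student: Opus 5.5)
The plan is to represent the observed-data score through the complete-data score (Fisher's identity), split it into a Gaussian long-memory part and a jump part, prove a joint CLT for the two using the hypotheses of the lemma, and then show that the limiting covariance is nondegenerate.

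\emph{Fisher's identity.} Write the observed-data score as a conditional expectation of the complete-data score:
$$
S_n(\theta_0)=E_{\theta_0}\bigl[\nabla_\theta \ell_c(\theta_0;Y,N)\mid Y\bigr].
$$
This is justified by differentiating $p_\theta(Y)=\sum_N p_\theta(Y\mid N)p_\theta(N)$ under the sum, using the integrable envelope of (A2). It follows at once that $E_{\theta_0}S_n(\theta_0)=0$. Moreover, by the conditional variance decomposition, $\operatorname{Var}(S_n)$ is bounded by the complete-data information. That information is finite by Lemma 1, since $\Sigma_n^{-1}$ has spectral norm at most $(\underline{\sigma}^2\Delta t)^{-1}$, and by the Poisson moments in (A4).

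\emph{Decomposition.} Split $\nabla_\theta\ell_c$ into two pieces.
\begin{itemize}
\item[(i)] The Poisson piece, $\sum_t (N_t/\lambda-\Delta t)$. This is a sum of iid centred terms with finite variance.
\item[(ii)] The Gaussian quadratic-form piece, given by the $\mu,\sigma^2,\nu^2,H,\mu_J,\sigma_J^2$ derivatives displayed in Section 2. These have the form $r^\top\Sigma_n^{-1}(\partial_q\Sigma_n)\Sigma_n^{-1}r-\operatorname{tr}(\Sigma_n^{-1}\partial_q\Sigma_n)$, together with linear forms $\mathbf 1^\top\Sigma_n^{-1}r$ and $N^\top\Sigma_n^{-1}r$.
\end{itemize}
Conditionally on $N$, each quadratic form is a centred quadratic form in a Gaussian vector. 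For these I will invoke Fox--Taqqu (Theorem 1 of \cite{FoxTaqqu1986}) and Dahlhaus \cite{Dahlhaus1989}, which give asymptotic normality of $n^{-1/2}$ times such forms. Their Toeplitz-trace conditions are implied by (A3), since $H\le 1-\varepsilon_H$ keeps $f_H\in L^p$ for some $p>1$ together with the matching integrability of $1/f$. The diagonal $\operatorname{Diag}(N)$ and $\sigma^2\Delta t I$ perturbations are handled by the lower eigenvalue bound of Lemma 1.

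\emph{Joint CLT.} Apply the Cramér--Wold device. Fix $a\in\mathbb R^7$ and view $a^\top S_n$ as a sum of a martingale-difference array in the filtration generated by $(N_s,Z_s,X_s)_{s\le t}$. Alternatively, condition on $N$ and combine the conditional Gaussian-quadratic CLT with the iid CLT for the $N$-functionals, using the lemma's assumed joint CLT condition for the cross terms. Lemma 2 (stationarity and ergodicity) then makes the normalized conditional variance converge to a deterministic limit $a^\top I_Y(\theta_0)a$. Passing from complete-data to observed-data score via the conditional expectation, I would control $n^{-1/2}(E[\cdot\mid Y]-\cdot)$ by its variance, which again reduces to the finite-moment bounds.

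\emph{Main obstacle: positive definiteness of $I_Y(\theta_0)$.} I would argue by contradiction. Suppose $a^\top I_Y(\theta_0)a=0$ for some $a\neq0$. Then the directional derivative of the limiting normalized log-likelihood vanishes to second order along $a$. I would try to show this forces the map $\theta\mapsto$ law of $Y$ to be locally degenerate along $a$: the characteristic-function factorization in the proof of Lemma 4 would have derivative zero in direction $a$. The Gaussian factor depends smoothly and injectively on $(\sigma^2,\nu^2,H)$ through the spectral density $\sigma^2\Delta t/(2\pi)+\nu^2 f_H$. Linear independence of $1$, $f_H$ and $\partial_H f_H=f_H\cdot(\log$-type factor$)$ near $\omega=0$ would give local injectivity there. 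The compound-Poisson exponent depends injectively on $(\lambda,\mu_J,\sigma_J^2)$ through $\lambda(e^{i\mu_Ju-\sigma_J^2u^2/2}-1)$, whose three partial derivatives are linearly independent functions of $u$. Upgrading Lemma 4's global identification to this infinitesimal version is the step I expect to be delicate, in particular showing that the $H$-direction is not asymptotically absorbed by $\nu^2$.
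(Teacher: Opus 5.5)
Your skeleton (Fisher's identity, Fox--Taqqu/Dahlhaus for the Gaussian quadratic forms, an iid CLT for the Poisson counts, and the lemma's postulated joint CLT) is the paper's. However, both steps you add beyond the paper's argument fail.

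\emph{Complete-to-observed transfer.} The normalized difference $n^{-1/2}\bigl(E[\nabla_\theta\ell_c\mid Y]-\nabla_\theta\ell_c\bigr)$ is not negligible. Its covariance is $n^{-1}E[\operatorname{Var}(\nabla_\theta\ell_c\mid Y)]$, that is, $n^{-1}$ times the missing information, and the missing information grows linearly in $n$. For the $\lambda$-coordinate, conditioning additionally on $X$ gives $E\operatorname{Var}(\sum_t N_t\mid Y)\ge\sum_t E\operatorname{Var}(N_t\mid V_t)=cn$. Here $V_t=\mu_JN_t+\sigma_J\sqrt{N_t}Z_t$, and $c>0$ because the laws $\mathcal{N}(k\mu_J,k\sigma_J^2)$, $k\ge1$, overlap. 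Finite moments therefore give you $O(1)$ here, not $o(1)$. Your route would produce the complete-data information $I_c(\theta_0)$ as limiting covariance instead of $I_Y(\theta_0)=I_c(\theta_0)-I_{\mathrm{miss}}(\theta_0)$. The error is hidden where you identify the limit of the complete-data conditional variance with $a^\top I_Y(\theta_0)a$. The CLT has to be proved for $E[\nabla_\theta\ell_c\mid Y]$ itself, which is a nonlocal nonlinear functional of the whole sample because $\Sigma_n$ couples all coordinates. The paper sidesteps this only by putting the joint CLT into the hypotheses. Two smaller points: given $N$, the matrices $\Sigma_n^{-1}(\partial_q\Sigma_n)\Sigma_n^{-1}$ are not Toeplitz, so Theorem 1 of Fox--Taqqu does not apply as stated. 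Also, Lemma 2 gives no law of large numbers for the random traces $n^{-1}\operatorname{tr}\bigl((\Sigma_n^{-1}\partial_q\Sigma_n)^2\bigr)$.

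\emph{Positive definiteness.} The paper's proof never argues this, and your contradiction argument cannot work, because it is false in the $\mu$ direction. Since $\Sigma_n\succeq\nu^2\Gamma_n(H)$ and $r\mid N\sim\mathcal{N}(0,\Sigma_n)$ at $\theta_0$,
\[
\operatorname{Var}\bigl(\partial_\mu\ell_n(\theta_0)\bigr)\le\operatorname{Var}\bigl(\partial_\mu\ell_c(\theta_0)\bigr)=\Delta t^2\,E\bigl[\mathbf 1^\top\Sigma_n^{-1}\mathbf 1\bigr]\le\frac{\Delta t^2}{\nu^2}\,\mathbf 1^\top\Gamma_n(H)^{-1}\mathbf 1\asymp n^{2-2H}.
\]
The last step is Adenstedt's theorem: the best linear unbiased estimator of a constant mean in noise with spectral density $\sim C|\omega|^{1-2H}$ at the origin has variance of order $n^{2H-2}$ (see also Beran~\cite{Beran1994}). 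Hence $n^{-1/2}\partial_\mu\ell_n(\theta_0)\to0$ in $L^2$, and the $(\mu,\mu)$ entry of $I_Y(\theta_0)$ is zero.

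The faulty step is your implication ``$a^\top I_Y(\theta_0)a=0$ forces $\theta\mapsto$ law of $Y$ to be locally degenerate along $a$.'' For every $n$ the law is locally injective in $\mu$, but its information grows only like $n^{2-2H}=o(n)$. The same bound, via data processing, shows that $n^{-1}$ times the Kullback--Leibler divergence between the laws at $\mu$ and $\mu_0$ (other coordinates at $\theta_0$) tends to $0$, so $Q$ is flat along the $\mu$-axis. The $H$-versus-$\nu^2$ issue you flag is therefore not the real obstruction. Positive definiteness at rate $\sqrt n$ can hold at best for the block excluding $\mu$; $\mu$ itself is only $n^{1-H}$-estimable, as is usual for a location parameter under long memory.
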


\begin{proof}
The score of the observed-data likelihood can be written using Fisher's identity as
$$
S_n(\theta)
=
E_\theta
\left[
\nabla_\theta
\ell_c(\theta;Y,N)
\mid Y
\right].
$$
Consequently, the observed-data score incorporates both the long-memory Gaussian component and the latent jump component.

For the Gaussian component, the spectral density satisfies
$$
f_H(\omega)
\sim
C(H)|\omega|^{1-2H},
\qquad
\omega\to0,
$$
with $H>1/2$. Fox and Taqqu~\cite{FoxTaqqu1986} established consistency and asymptotic normality for likelihood-type estimation in strongly dependent stationary Gaussian sequences under conditions encompassing fractional Gaussian noise. Dahlhaus~\cite{Dahlhaus1989} established asymptotic normality and the limiting Fisher information for maximum-likelihood estimation in long-range-dependent Gaussian processes. These results provide the Gaussian part of the $n^{1/2}$ likelihood asymptotics.

The jump process itself has no long-range dependence. By (A4), its counts and jump-size innovations have finite moments and are independent across time and independent of the fractional Gaussian component. Therefore the jump contribution has finite second-order fluctuations. The additional condition that the latent-jump contribution and the Gaussian score satisfy a joint central limit theorem ensures that their combined contribution is asymptotically Gaussian.

Hence the complete observed-data score satisfies
$$
\frac{1}{\sqrt n}
S_n(\theta_0)
\xrightarrow{d}
N\left(
0,I_Y(\theta_0)
\right).
$$
The important point is that the cited Gaussian results establish the long-memory Gaussian part of the argument; the extension to the MFMJD observed score requires the finite-moment and joint-convergence condition stated in the lemma.
\end{proof}

\begin{lemma}[Observed Information Convergence]\mbox{}\\
Let
$$
J_n(\theta)
=
-\nabla_\theta^2\ell_n(\theta)
$$
denote the observed information matrix. If  
$
\widetilde\theta_n
\xrightarrow{P}
\theta_0,
$ then, under the smoothness and long-memory likelihood conditions above,
$$
\frac1nJ_n(\widetilde\theta_n)
\xrightarrow{P}
I_Y(\theta_0).
$$
\end{lemma}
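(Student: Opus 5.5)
The plan is the usual two-step argument for extremum estimators: split the error into a deterministic-limit part and a random-fluctuation part,
$$
\frac1nJ_n(\widetilde\theta_n)-I_Y(\theta_0)
=
\Bigl[\tfrac1nJ_n(\widetilde\theta_n)-\tfrac1nJ_n(\theta_0)\Bigr]
+
\Bigl[\tfrac1nJ_n(\theta_0)-I_Y(\theta_0)\Bigr],
$$
and show that each bracket is $o_p(1)$.

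\textbf{Second bracket.} I would write the observed Hessian through Louis' identity,
$$
\nabla_\theta^2\ell_n(\theta)
=
E_\theta\bigl[\nabla^2_\theta\ell_c\mid Y\bigr]
+\operatorname{Var}_\theta\bigl[\nabla_\theta\ell_c\mid Y\bigr].
$$
The entries of $\nabla^2_\theta\ell_c$ are quadratic forms in $r(N)$ with matrices built from $\Sigma_n^{-1}$ and $\partial\Sigma_n$, plus traces, plus the Poisson terms $-N_i/\lambda^2$. By Lemma 1, $\|\Sigma_n^{-1}\|\le(\underline{\sigma}^2\Delta t)^{-1}$ uniformly, so these expressions are controlled.

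I would show $\frac1nJ_n(\theta_0)\to I_Y(\theta_0)$ in two steps.
\begin{enumerate}
\item The expectations converge. Combine $E_{\theta_0}[J_n(\theta_0)]=\operatorname{Var}_{\theta_0}(S_n(\theta_0))$ (the information equality, valid under the envelope in (A2)) with Lemma 5. Lemma 5 identifies $\lim n^{-1}\operatorname{Var}S_n(\theta_0)=I_Y(\theta_0)$, provided uniform integrability holds, which the finite moments in (A4) supply.
\item The fluctuations vanish, i.e.\ $n^{-1}(J_n-EJ_n)\to0$ in probability. For the Gaussian long-memory block this is the quadratic-form variance bound of Fox and Taqqu~\cite{FoxTaqqu1986} and Dahlhaus~\cite{Dahlhaus1989}. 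Their estimate $\operatorname{Var}(Y^\top A_nY)=O(\operatorname{tr}(A_n\Sigma A_n\Sigma))=O(n)$ applies when $A_n$ has bounded operator norm, and (A3) keeps the spectral singularity at $\omega=0$ integrable for $H<1-\varepsilon_H$. For the jump block, I would use the ergodic theorem of Lemma 2 on the stationary array of conditional terms.
\end{enumerate}

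\textbf{First bracket.} This requires stochastic equicontinuity of $\theta\mapsto n^{-1}J_n(\theta)$ near $\theta_0$. I would bound
$$
\sup_{\|\theta-\theta_0\|\le\delta}\Bigl\|\tfrac1n\nabla^3_\theta\ell_n(\theta)\Bigr\|=O_p(1),
$$
or, staying within (A2), use a mean-value/dominated-continuity argument. The third derivatives are again traces and quadratic forms in $\Sigma_n^{-1}$ and $\partial^k\Sigma_n$. Their operator norms are uniformly bounded on $\Theta$: $\partial_H^k\Gamma_n(H)$ has symbol $\partial_H^kf_H$, which carries extra $\log|\omega|$ factors but stays integrable for $H\in[\tfrac12+\varepsilon_H,1-\varepsilon_H]$. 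Then $\|\frac1nJ_n(\widetilde\theta_n)-\frac1nJ_n(\theta_0)\|\le O_p(1)\|\widetilde\theta_n-\theta_0\|\to0$, using compactness from (A1) and $\widetilde\theta_n\to\theta_0$ with probability tending to one inside the $\delta$-ball.

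\textbf{Main obstacle.} I expect the hardest step to be the $O_p(1)$ envelope for the $H$-derivatives, because $\Gamma_n(H)$ and its derivatives do \emph{not} have bounded operator norm under long memory (eigenvalues grow like $n^{2H-1}$). I would first try to verify that only sandwiched products $\Sigma_n^{-1/2}\partial^k\Sigma_n\Sigma_n^{-1/2}$ enter the score and Hessian. I would then use the Fox--Taqqu trace lemma $n^{-1}\operatorname{tr}\prod T_n(f_j)T_n(g_j)^{-1}\to(2\pi)^{-1}\int\prod f_j/g_j$, whose product of symbols has order $|\omega|^{0}\log^k|\omega|$ and is therefore integrable. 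The random $\operatorname{Diag}(N)$ term breaks the Toeplitz structure; I would handle it by the inequality $\Sigma_n\succeq\sigma^2\Delta tI+\nu^2\Gamma_n(H)$, which bounds the sandwiched norms uniformly in $N$.
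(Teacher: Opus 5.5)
Your outline is more careful than the paper's proof in two ways. The paper only asserts that the ``Gaussian Toeplitz'' part of the normalized Hessian converges by Fox--Taqqu/Dahlhaus theory and that the ``jump'' part obeys a law of large numbers by (A4). It never separates $\widetilde\theta_n$ from $\theta_0$, and it never explains why the limit is the $I_Y(\theta_0)$ of Lemma~5. Your equicontinuity split and your information-equality link to Lemma~5 are therefore real additions.

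However, the load-bearing step is the same one the paper leaves unproved, and your tools do not reach it. By your own Louis identity, every entry of $J_n(\theta)$ is a posterior moment, under $p_\theta(N\mid Y)$, of traces, Poisson terms and quadratic forms in $r(N)$. These are built from the random, non-Toeplitz matrix $\Sigma_n(\theta,N)^{-1}$, and there is also the missing-information term $\operatorname{Var}_\theta[\nabla_\theta\ell_c\mid Y]$. So there is no separate Gaussian long-memory block. The Fox--Taqqu/Dahlhaus quadratic-form bounds are for $Y^\top A_nY$ with $Y$ Gaussian and $A_n$ deterministic and Toeplitz-generated. Here $Y$ is a Gaussian mixture, and the matrices depend on the latent $N$ before being averaged over a posterior.

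Nor does Lemma~2 handle the jump block. Quantities like $E_\theta[N_t\mid Y_1,\dots,Y_n]$ depend on the whole sample through the dense $\Sigma_n^{-1}$. They therefore form a non-stationary triangular array, not a fixed functional of the stationary sequence $(Y_t)$ to which the ergodic theorem applies. The missing idea is a forgetting (decoupling) property of the posterior of $N$ given $Y$ under long memory. You would need to approximate these terms by shift-covariant functionals of the two-sided sequence, and to show that posterior covariances such as $\sum_s\operatorname{Cov}_\theta(N_s,N_t\mid Y)$ are summable. Without that, even $n^{-1}\operatorname{Var}_\theta[\nabla_\theta\ell_c\mid Y]=O_p(1)$ is unclear. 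Likewise, uniform integrability of $n^{-1}S_nS_n^\top$ is not supplied by (A4), which only controls the jump variables.

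Your first bracket also fails as written. Integrability of the symbol $\partial_H^k f_H$ gives $O(1)$ normalized traces, not bounded operator norms. In fact $\|\partial_H^k\Gamma_n(H)\|$ grows like $n^{2H-1}\log^k n$, as you later acknowledge.

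Sandwiching does not cure this. As $\omega\to0$, $\partial_H f_H/f_H\sim-2\log|\omega|$. With $v=\mathbf{1}/\sqrt n$ one gets $v^\top \partial_H\Sigma_n v / v^\top\Sigma_n v \approx 2\log(n\Delta t)$ whenever $n^{-1}\sum_t N_t=O(1)$, so $\|\Sigma_n^{-1/2}\partial_H\Sigma_n\Sigma_n^{-1/2}\|\gtrsim\log n$. For $\theta$ in your $\delta$-ball with $H<H_0$, the same test vector shows $\|\Sigma_n(\theta)^{-1/2}\Sigma_n(\theta_0)\Sigma_n(\theta)^{-1/2}\|\gtrsim n^{2(H_0-H)}$.

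Operator-norm bookkeeping therefore gives only something like $O_p(n^{2\delta}\log^3 n)$ for $\sup_{\|\theta-\theta_0\|\le\delta}\|n^{-1}\nabla^3_\theta\ell_n(\theta)\|$. Since only $\|\widetilde\theta_n-\theta_0\|=o_p(1)$ is assumed, with no rate, the product need not vanish.

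What is needed is trace-level control:
\begin{enumerate}
\item Schatten-norm H\"older bounds, transferred from $\Sigma_n(\theta,N)$ to the Toeplitz matrix $\sigma^2\Delta t I+\nu^2\Gamma_n(H)$ by Loewner monotonicity applied to PSD pieces. Your inequality $\Sigma_n\succeq\sigma^2\Delta tI+\nu^2\Gamma_n(H)$ does transfer eigenvalue bounds, hence Schatten norms.
\item A uniform Fox--Taqqu/Dahlhaus trace lemma on a ball small enough that the symbol ratios stay $L^p$-integrable.
\end{enumerate}
Even this covers only complete-data quantities, not posterior moments taken under $\theta\ne\theta_0$.

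Relatedly, $\operatorname{tr}(A_n\Sigma A_n\Sigma)=O(n)$ does not follow from $\|A_n\|=O(1)$. For $A_n=I$ and $H>3/4$, $\operatorname{tr}(\Sigma^2)\asymp n^{4H-2}$. The right condition is on $\Sigma^{1/2}A_n\Sigma^{1/2}$.
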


\begin{proof}
The conditional covariance matrix contains the Toeplitz fractional Gaussian noise component
$\Gamma_n(H).$

For stationary Gaussian processes, the asymptotic behavior of the associated Toeplitz matrices, log-determinants, traces, and quadratic forms provides the basis for likelihood asymptotics. The relevant Toeplitz theory is given by Grenander and Szeg$\H{o}$~\cite{GrenanderSzego1958}, while its application to long-memory likelihoods is developed by Fox and Taqqu~\cite{FoxTaqqu1986} and Dahlhaus~\cite{Dahlhaus1989}.

Under (A1)--(A3), the covariance matrix is positive definite and its dependence on $\theta$ is smooth. Therefore the normalized Gaussian likelihood Hessian converges to its limiting information matrix under the long-memory likelihood regularity conditions.

The jump component contributes through the conditional Gaussian covariance
$
\sigma_J^2\operatorname{Diag}(N)
$ and through the Poisson likelihood. By (A4), the jump variables possess finite moments of every order. Consequently, the normalized jump contribution to the Hessian satisfies the corresponding law of large numbers.

Combining the Gaussian Toeplitz contribution with the finite-moment jump contribution gives
$$
\frac1nJ_n(\widetilde\theta_n)
\xrightarrow{P}
I_Y(\theta_0).
$$
\end{proof}

\subsection{Consistency}

\begin{theorem}[Consistency]\mbox{}\\
Let
$$
\widehat\theta_{n,M_n}
=
\arg\max_{\theta\in\Theta}
Q_{n,M_n}^{\mathrm{EM}}(\theta).
$$
Under Assumptions (A1)--(A5) and Lemmas 1--4,
$$
\widehat\theta_{n,M_n}
\xrightarrow{P}
\theta_0.
$$
\end{theorem}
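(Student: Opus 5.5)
The plan is to follow the standard M-estimator route of Theorem 5.7 in van der Vaart~\cite{VanDerVaart2000}, applied to the Monte Carlo EM criterion $Q_{n,M_n}^{\mathrm{EM}}$. Concretely, I need three ingredients:
\begin{itemize}
\item a deterministic limit criterion $Q^{\mathrm{EM}}(\theta)$;
\item uniform convergence $\sup_{\theta\in\Theta}|Q_{n,M_n}^{\mathrm{EM}}(\theta)-Q^{\mathrm{EM}}(\theta)|\xrightarrow{P}0$;
\item a well-separated maximum of $Q^{\mathrm{EM}}$ at $\theta_0$, i.e. $\sup_{\|\theta-\theta_0\|\ge\delta}Q^{\mathrm{EM}}(\theta)<Q^{\mathrm{EM}}(\theta_0)$ for every $\delta>0$.
\end{itemize}
Given these, and since $\widehat\theta_{n,M_n}$ maximizes $Q_{n,M_n}^{\mathrm{EM}}$ exactly (so it is trivially a near-maximizer), consistency follows immediately.

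\textbf{Step 1: removing the Monte Carlo error.} By the triangle inequality,
$$
\sup_\theta|Q_{n,M_n}^{\mathrm{EM}}-Q^{\mathrm{EM}}|\le \sup_\theta|Q_{n,M_n}^{\mathrm{EM}}-Q_n^{\mathrm{EM}}|+\sup_\theta|Q_n^{\mathrm{EM}}-Q^{\mathrm{EM}}|.
$$
The first term is $o_p(n^{-1/2})\subset o_p(1)$ by (A5).

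\textbf{Step 2: uniform convergence of the exact EM criterion.} For the second term I decompose $\ell_c(\theta;Y,N)=\log p_\theta(Y)+\log p_\theta(N\mid Y)$, so that
$$
Q_n^{\mathrm{EM}}(\theta)=Q_n(\theta)-\tfrac1n\,\mathrm{KL}_n(\theta)+c_n,
$$
where $\mathrm{KL}_n(\theta)$ is the Kullback--Leibler divergence between $p_{\theta_0}(N\mid Y)$ and $p_\theta(N\mid Y)$, and $c_n=\frac1nE_{\theta_0}[\log p_{\theta_0}(N\mid Y)\mid Y]$ does not depend on $\theta$. Lemma 3 handles $Q_n$ uniformly. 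For the remaining term I would argue directly on $\ell_c$. Lemma 1 gives $\lambda_{\min}(\Sigma_n)\ge\underline\sigma^2\Delta t$, which bounds the quadratic form, and $\log\det\Sigma_n\le\sum_t\log(\overline\sigma^2\Delta t+\overline\nu^2\Gamma_H(0)+\overline\sigma_J^2N_t)$ is linear-growth controlled. The Poisson moments from (A4) and stationarity/ergodicity from Lemma 2 (applied to the joint process $(Y_t,N_t)$) then give a pointwise law of large numbers for $\frac1n E_{\theta_0}[\ell_c\mid Y]$. This is upgraded to a uniform one by the equicontinuity argument of Lemma 3, using the derivative envelope in (A2) on the compact set $\Theta$ from (A1).

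\textbf{Step 3: identification.} To show $\theta_0$ is the unique maximizer of $Q^{\mathrm{EM}}$, I use Jensen's inequality: for every $n$, $E_{\theta_0}[\ell_c(\theta)]\le E_{\theta_0}[\ell_c(\theta_0)]$, with equality only if $p_\theta(Y,N)=p_{\theta_0}(Y,N)$. Equality of the joint laws forces equality of the marginal law of $Y$, hence $\theta=\theta_0$ by Lemma 4. By continuity of $Q^{\mathrm{EM}}$ (from (A2)) and compactness of $\Theta$, a unique maximizer is automatically well separated.

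\textbf{Main obstacle.} The hardest step is turning the strict per-$n$ Kullback--Leibler gap into a strict gap for the normalized limit. Dividing by $n$ could drive the gap to zero, for example through the long-memory parameters $(\nu^2,H)$, where the information from the Toeplitz part might not grow linearly. My first attempt would be to show the gap is bounded below by a sum of stationary per-observation contributions. The jump and $\sigma^2$ parameters already enter additively through the $N_t$ terms. For $(\nu^2,H)$ I would invoke the Grenander--Szeg\H{o} limit
$$
\tfrac1n\log\det\Gamma_n\to\tfrac1{2\pi}\int\log f,
$$
as used by Fox and Taqqu, which gives the Whittle-type limit contrast $\frac1{4\pi}\int\bigl(f_{\theta_0}/f_\theta-1-\log(f_{\theta_0}/f_\theta)\bigr)\,d\omega$. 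This contrast is strictly positive whenever $f_\theta\ne f_{\theta_0}$, and that inequality follows from Lemma 4 via (A3).
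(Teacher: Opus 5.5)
Your skeleton is the same as the paper's: the triangle inequality with (A5), uniform convergence of the exact EM criterion, identification of the limit, and the argmax theorem. You are also right that the paper takes two shortcuts here. It cites Lemma 3, which concerns $Q_n(\theta)=\frac1n\log p_\theta(Y)$ rather than $Q_n^{\mathrm{EM}}$. It also reads off a unique maximizer of the \emph{limit} criterion from Lemma 4, which is a finite-$n$ statement. Your Kullback--Leibler decomposition and Jensen step correctly isolate the second issue.

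\textbf{The gap: the flat direction is $\mu$.} The obstacle you flag is not just hard; in one direction it is false. That direction is not $(\nu^2,H)$, where the Whittle contrast does grow linearly, but the drift $\mu$. Take $\theta=\theta_0+\delta e_\mu$, i.e.\ only the drift is changed. Then $p_\theta(N)$ and $\Sigma_n$ are unchanged, and only the conditional mean shifts by $\delta\Delta t\,\mathbf 1$. Using $\Sigma_n(\theta_0)\succeq\nu_0^2\Gamma_n(H_0)$,
\[
\frac1n\Bigl(E_{\theta_0}\ell_c(\theta_0;Y,N)-E_{\theta_0}\ell_c(\theta;Y,N)\Bigr)
=\frac{\delta^2\Delta t^2}{2n}\,E_{\theta_0}\bigl[\mathbf 1^\top\Sigma_n(\theta_0)^{-1}\mathbf 1\bigr]
\le\frac{\delta^2\Delta t^2}{2n\,\nu_0^2}\,\mathbf 1^\top\Gamma_n(H_0)^{-1}\mathbf 1 .
\]
Here $\mathbf 1^\top\Gamma_n(H_0)^{-1}\mathbf 1$ is the reciprocal of the variance of the best linear unbiased estimator of a constant mean under fractional Gaussian noise. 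That quantity is of order $n^{2-2H_0}$ (Adenstedt 1974; Samarov and Taqqu 1988), so the right-hand side is $O(n^{1-2H_0})\to0$.

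Consequences:
\begin{itemize}
\item $Q^{\mathrm{EM}}$ is constant along $\mu$. So $\theta_0$ is neither a unique nor a well-separated maximizer, and Theorem 5.7 of van der Vaart cannot give $\widehat\mu\to\mu_0$.
\item Your Whittle contrast is mean-free and never sees $\mu$.
\item The ``sum of stationary per-observation contributions'' argument fails exactly for a constant mean shift, whose energy sits at frequency zero where $f_H$ blows up.
\item The observed-data divergence is bounded by the complete-data one, so the paper's limit $Q$ is also flat in $\mu$. Its appeal to Lemma 4 has the same hole.
\end{itemize}

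\textbf{How to close it: split the parameter.}
\begin{itemize}
\item Apply the argmax argument with the maximizer set $\{\theta\in\Theta:\theta_j=\theta_{0,j}\ \text{for } j\neq\mu\}$. This gives consistency of the other six coordinates, provided you verify linear growth of the complete-data divergence in every remaining direction. For $\mu_J$ that growth must come from the centred part $N-\lambda_0\Delta t\,\mathbf 1$. Its cross term with a constant shift is $O(n^{3/2-H_0})=o(n)$ by Cauchy--Schwarz.
\item Then prove $\widehat\mu\to\mu_0$ separately at its natural rate $n^{1-H_0}$. You can do this through the closed-form first-order condition for $\mu$ (a GLS-type weighted mean), or by normalizing the $\mu$-dependent part of the criterion by $\mathbf 1^\top\Sigma_n^{-1}\mathbf 1$ instead of $n$.
\end{itemize}
The same computation shows that the $\mu\mu$ entry of $I_Y(\theta_0)$ vanishes. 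So the paper's later $\sqrt n$ normality claim cannot hold for $\mu$.

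\textbf{A smaller point on Step 2.} The ergodic theorem does not give a pointwise law of large numbers for $\frac1n\ell_c$. The terms $\mathbf r^\top\Sigma_n^{-1}\mathbf r$ and $\log\det\Sigma_n$ are not additive functionals of $(Y_t,N_t)$. You need Szeg\"o-type trace and log-determinant asymptotics for a Toeplitz matrix plus a random diagonal. Your $\lambda_{\min}$ and Hadamard bounds supply domination only, not convergence.
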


\begin{proof}
By the triangle inequality,
\begin{align*}
\sup_{\theta\in\Theta}
\left|
Q_{n,M_n}^{\mathrm{EM}}(\theta)-Q(\theta)
\right|
\leq
\sup_{\theta\in\Theta}
\left|
Q_{n,M_n}^{\mathrm{EM}}(\theta)
-
Q_n^{\mathrm{EM}}(\theta)
\right|\\
+
\sup_{\theta\in\Theta}
\left|
Q_n^{\mathrm{EM}}(\theta)-Q(\theta)
\right|.
\end{align*}
The first term converges to zero in probability by (A5), while the second converges uniformly to zero by Lemma 3. Hence,
$$
\sup_{\theta\in\Theta}
\left|
Q_{n,M_n}^{\mathrm{EM}}(\theta)-Q(\theta)
\right|
\xrightarrow{P}0.
$$
By Lemma 4, $Q(\theta)$ has a unique maximizer at $\theta_0$. Therefore, the conditions of the Argmax theorem are satisfied see the standard consistency theorem for extremum estimators, in van der Vaart~\cite{VanDerVaart2000}, 

Hence:
$$
\widehat\theta_{n,M_n}
\xrightarrow{P}
\theta_0.
$$
\end{proof}

\subsection{Asymptotic Normality}

\begin{theorem}[Asymptotic Normality]
Suppose that the conditions of the consistency theorem hold and that the score and observed-information results of Lemmas 5 and 6 hold. Then
$$
\sqrt n
\left(
\widehat\theta_{n,M_n}-\theta_0
\right)
\xrightarrow{d}
N\left(
0,I_Y(\theta_0)^{-1}
\right).
$$
\end{theorem}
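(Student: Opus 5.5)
The plan is the standard Taylor-expansion argument for M-estimators around $\theta_0$. It combines Theorem 1 (consistency) with Lemmas 5 and 6, and uses (A5) to absorb the Monte Carlo error. We work with the normalized criterion $Q_{n,M_n}^{\mathrm{EM}}$. By (A1), $\theta_0\in\operatorname{int}(\Theta)$. Theorem 1 gives $\widehat\theta_{n,M_n}\xrightarrow{P}\theta_0$, so with probability tending to one the maximizer is interior. On that event the first-order condition $\nabla_\theta Q_{n,M_n}^{\mathrm{EM}}(\widehat\theta_{n,M_n})=0$ holds.

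First we relate the EM criterion to the observed-data likelihood at the level of scores. By Fisher's identity (as used in the proof of Lemma 5), $\nabla_\theta Q_n^{\mathrm{EM}}(\theta_0)=n^{-1}S_n(\theta_0)$. Away from $\theta_0$ the gradients differ. The Hessian of $Q_n^{\mathrm{EM}}$ is $-n^{-1}$ times the complete-data information, not $-n^{-1}J_n$. We therefore treat the fixed point of the EM map, which is the quantity the iteration actually produces. At a stationary point $\bar\theta$ of the EM mapping, $\nabla_\theta E_{\bar\theta}[\ell_c(\theta;Y,N)\mid Y]|_{\theta=\bar\theta}=S_n(\bar\theta)$. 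Hence the EM fixed point solves $S_n(\bar\theta)=0$ up to the Monte Carlo perturbation.

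To control that perturbation, we interpret the uniform $o_p(n^{-1/2})$ bound in (A5) as also controlling gradients. If needed, we strengthen (A5) to a gradient version using (A2) and the envelope condition. This gives
\begin{equation*}
S_n(\widehat\theta_{n,M_n}) = o_p(n^{1/2}).
\end{equation*}

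Next we expand the score by the mean value theorem, row by row, around $\theta_0$:
\begin{equation*}
o_p(n^{1/2}) = S_n(\theta_0) - J_n(\widetilde\theta_n)\,(\widehat\theta_{n,M_n}-\theta_0),
\end{equation*}
with $\widetilde\theta_n$ on the segment between $\widehat\theta_{n,M_n}$ and $\theta_0$. Consequently $\widetilde\theta_n\xrightarrow{P}\theta_0$ by Theorem 1. Lemma 6 gives $n^{-1}J_n(\widetilde\theta_n)\xrightarrow{P}I_Y(\theta_0)$. By Lemma 5 this limit is positive definite, so $n^{-1}J_n(\widetilde\theta_n)$ is invertible with probability tending to one. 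Solving,
\begin{equation*}
\sqrt n(\widehat\theta_{n,M_n}-\theta_0)=\Bigl(\tfrac1n J_n(\widetilde\theta_n)\Bigr)^{-1}\Bigl(\tfrac{1}{\sqrt n}S_n(\theta_0)+o_p(1)\Bigr).
\end{equation*}
By Lemma 5, $n^{-1/2}S_n(\theta_0)\xrightarrow{d}N(0,I_Y(\theta_0))$. Slutsky's lemma then gives the limit $N(0,I_Y^{-1}I_YI_Y^{-1})=N(0,I_Y(\theta_0)^{-1})$.

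The main obstacle is the second step: passing from the Monte Carlo EM criterion to an approximate zero of the observed-data score at rate $o_p(n^{1/2})$. A sup-norm bound on criterion values does not by itself control gradients. Moreover, the argmax of $Q_n^{\mathrm{EM}}$ for a fixed $\theta_0$ inside the conditional expectation is not the MLE. I would handle this by identifying $\widehat\theta_{n,M_n}$ with the EM fixed point, where the self-consistency equation makes Fisher's identity exact. I would then use the $C^2$ envelope in (A2) together with ergodicity of the MH chain to upgrade (A5) to uniform convergence of gradients on a shrinking neighbourhood of $\theta_0$.
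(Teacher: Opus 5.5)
Your route differs from the paper's at its most delicate step, and there your diagnosis is right. The paper Taylor-expands the first-order condition of $Q^{\mathrm{EM}}_{n,M_n}$ itself. It reads (A5) as giving $\sqrt n\,\nabla_\theta Q^{\mathrm{EM}}_{n,M_n}(\theta_0)=n^{-1/2}\nabla_\theta\ell_n(\theta_0)+o_p(1)$. It then cites Lemma 6 for $-\nabla^2_\theta Q^{\mathrm{EM}}_{n,M_n}(\widetilde\theta_n)\xrightarrow{P}I_Y(\theta_0)$.

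As you observe, Lemma 6 concerns $J_n=-\nabla^2_\theta\ell_n$. The Hessian of the EM criterion instead tends to the complete-data information $I_c(\theta_0)$, the limit of $\frac1n E_{\theta_0}[-\nabla^2_\theta\ell_c(\theta_0;Y,N)\mid Y]$. So for the argmax defined in Theorem 1, that expansion actually yields the sandwich $I_c^{-1}I_YI_c^{-1}$. This differs from $I_Y^{-1}$ whenever $I_c\neq I_Y$, and here $I_c\neq I_Y$ because $N$ is genuinely missing. Passing to the EM fixed point is the right way to reach $I_Y(\theta_0)^{-1}$: there Fisher's identity turns the M-step condition into $S_n(\widehat\theta)=0$, and Lemma 6 is applied to the matrix it actually describes. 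Be explicit, though, that you are then proving the conclusion for a different estimator from the one the theorem names.

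Two steps of your own route are still unjustified.

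\emph{Consistency.} You get $\widetilde\theta_n\xrightarrow{P}\theta_0$ ``by Theorem 1''. But Theorem 1 concerns $\arg\max_\theta Q^{\mathrm{EM}}_{n,M_n}(\theta)$, which you have just argued is not the EM fixed point. A fixed point of EM is only a stationary point of $\ell_n$. You must either assume it is, up to Monte Carlo error, the global maximizer of $\ell_n$, or assume outright that the iteration selects a consistent root. In the first case consistency follows from Lemma 3 (which is precisely a ULLN for $\ell_n/n$), Lemma 4 and the argmax theorem.

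\emph{The Monte Carlo term.} You need $S_n(\widehat\theta_{n,M_n})=o_p(n^{1/2})$, that is, an $o_p(n^{-1/2})$ bound on the \emph{gradient} of the Monte Carlo error. Moreover, at the fixed point the MH draws target $p_{\widehat\theta}(N\mid Y)$, not the $\theta_0$-conditional law that (A5) is stated for. This cannot be extracted from (A5) together with the $C^2$ envelope in (A2). A sup bound $o_p(n^{-1/2})$ on values plus an $O_p(1)$ bound on second derivatives gives, via the Landau--Kolmogorov inequality, only $o_p(n^{-1/4})$ for the gradient, and ergodicity of the chain supplies no rate at all. The condition has to be added as a separate hypothesis, uniform in the conditioning parameter over a neighbourhood of $\theta_0$. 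The paper's proof tacitly needs the same gradient-level condition at $\theta_0$.
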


\begin{proof}
By the consistency theorem,
$$
\widehat\theta_{n,M_n}
\xrightarrow{P}
\theta_0.
$$
Since $\theta_0$ is an interior point of $\Theta$, the first-order condition holds with probability tending to one:
$$
\nabla_\theta
Q_{n,M_n}^{\mathrm{EM}}
(\widehat\theta_{n,M_n})
=
0.
$$
A Taylor expansion around $\theta_0$ gives
$$
0 =\nabla_\theta
Q_{n,M_n}^{\mathrm{EM}}(\theta_0)
+
\nabla_\theta^2
Q_{n,M_n}^{\mathrm{EM}}(\widetilde\theta_n)
\left(
\widehat\theta_{n,M_n}-\theta_0
\right),
$$
where $\widetilde\theta_n$ lies between
$\widehat\theta_{n,M_n}$ and $\theta_0$.

Therefore,
$$
\sqrt n
\left(
\widehat\theta_{n,M_n}-\theta_0
\right)
=
-
\left[
\nabla_\theta^2
Q_{n,M_n}^{\mathrm{EM}}
(\widetilde\theta_n)
\right]^{-1}
\sqrt n
\nabla_\theta
Q_{n,M_n}^{\mathrm{EM}}(\theta_0).
$$
By (A5), the Monte Carlo approximation error is negligible at the $n^{-1/2}$ scale. Consequently,
$$
\sqrt n
\nabla_\theta
Q_{n,M_n}^{\mathrm{EM}}(\theta_0)
=
\frac1{\sqrt n}
\nabla_\theta\ell_n(\theta_0)
+
o_p(1).
$$
By Lemma 5,
$$
\sqrt n
\nabla_\theta
Q_{n,M_n}^{\mathrm{EM}}(\theta_0)
\xrightarrow{d}
N\left(
0,I_Y(\theta_0)
\right).
$$
Furthermore, consistency implies
$$
\widetilde\theta_n
\xrightarrow{P}
\theta_0.
$$
Lemma 6 therefore gives
$$
-\nabla_\theta^2 Q_{n,M_n}^{\mathrm{EM}}
(\widetilde\theta_n)\xrightarrow{P}I_Y(\theta_0).
$$
Since $I_Y(\theta_0)$ is nonsingular,
$$
\left[-\nabla_\theta^2Q_{n,M_n}^{\mathrm{EM}}
(\widetilde\theta_n)\right]^{-1}\xrightarrow{P}
I_Y(\theta_0)^{-1}.
$$
Applying Slutsky's theorem yields
$$
\sqrt n\left(\widehat\theta_{n,M_n}-\theta_0 \right)\xrightarrow{d}
N\left(0,I_Y(\theta_0)^{-1}\right).
$$
The extremum-estimation and Taylor-expansion arguments are standard; see van der Vaart~\cite{VanDerVaart2000} and Ibragimov and Has'minskii~\cite{IbragimovHasminskii2013}.
\end{proof}


\section{Data}

The empirical analysis of this model employs daily closing prices of the Helsinki stock index(OMXH25) from January 2021 to June 2025. Let $S_t$ denote the closing price on trading day $t$. The continuously compounded daily return is defined by
\begin{equation}
Y_t=\log(S_t)-\log(S_{t-1})
=\log\left(\frac{S_t}{S_{t-1}}\right).
\end{equation}
After the data cleaning procedure, the resulting sample contains
$n=1171$ daily observations.

\begin{table}[h!]
\centering
\caption{Descriptive statistics of daily logarithmic returns}
\label{tab:descriptive}
\resizebox{\textwidth}{!}{%
\begin{tabular}{lrrrrrrr}
\hline
Variable & $N$ & Mean & Std. Dev. & Minimum & Maximum &
Skewness & Kurtosis \\
\hline
Daily log returns
& 1171
& $-2.5971\times10^{-5}$
& 0.010016
& -0.048204
& 0.040978
& -0.429975
& 2.605813 \\
\hline
\end{tabular}%
}
\end{table}

Table~\ref{tab:descriptive} presents the descriptive statistics of the return series. The mean daily return is
$-2.5971\times10^{-5}$, while the standard deviation is $0.010016$.
The minimum and maximum returns are $-0.048204$ and $0.040978$,
respectively. The return distribution exhibits moderate negative
skewness, with a skewness coefficient of $-0.429975$, indicating
greater asymmetry in the negative tail. The reported kurtosis is
$2.605813$, which is below the Gaussian benchmark of 3. These
characteristics, together with the substantial daily price movements
illustrated in Figures~\ref{fig:helsinki_price} and
\ref{fig:helsinki_returns}, provide empirical motivation for a model
that permits both persistent dependence and discontinuous movements.
The fractional component accommodates dependence through
$H>1/2$, while the compound Poisson component explicitly represents
discrete market shocks.
\begin{figure}[H]
\centering
\includegraphics[width=0.85\textwidth]{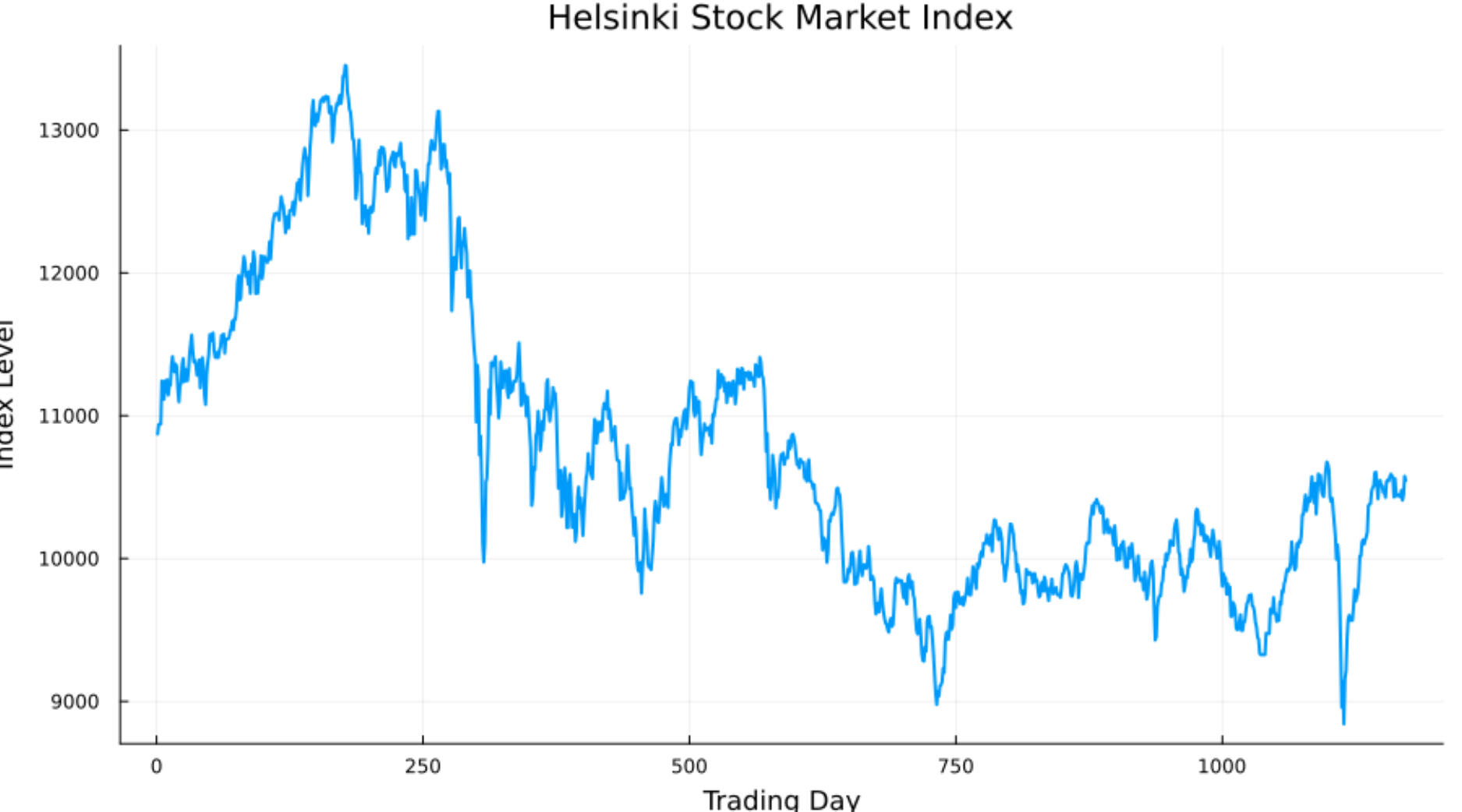}
\caption{Daily closing prices of the Helsinki stock market index,
January 2021--June 2025.}
\label{fig:helsinki_price}
\end{figure}

\begin{figure}[H]
\centering
\includegraphics[width=0.85\textwidth]{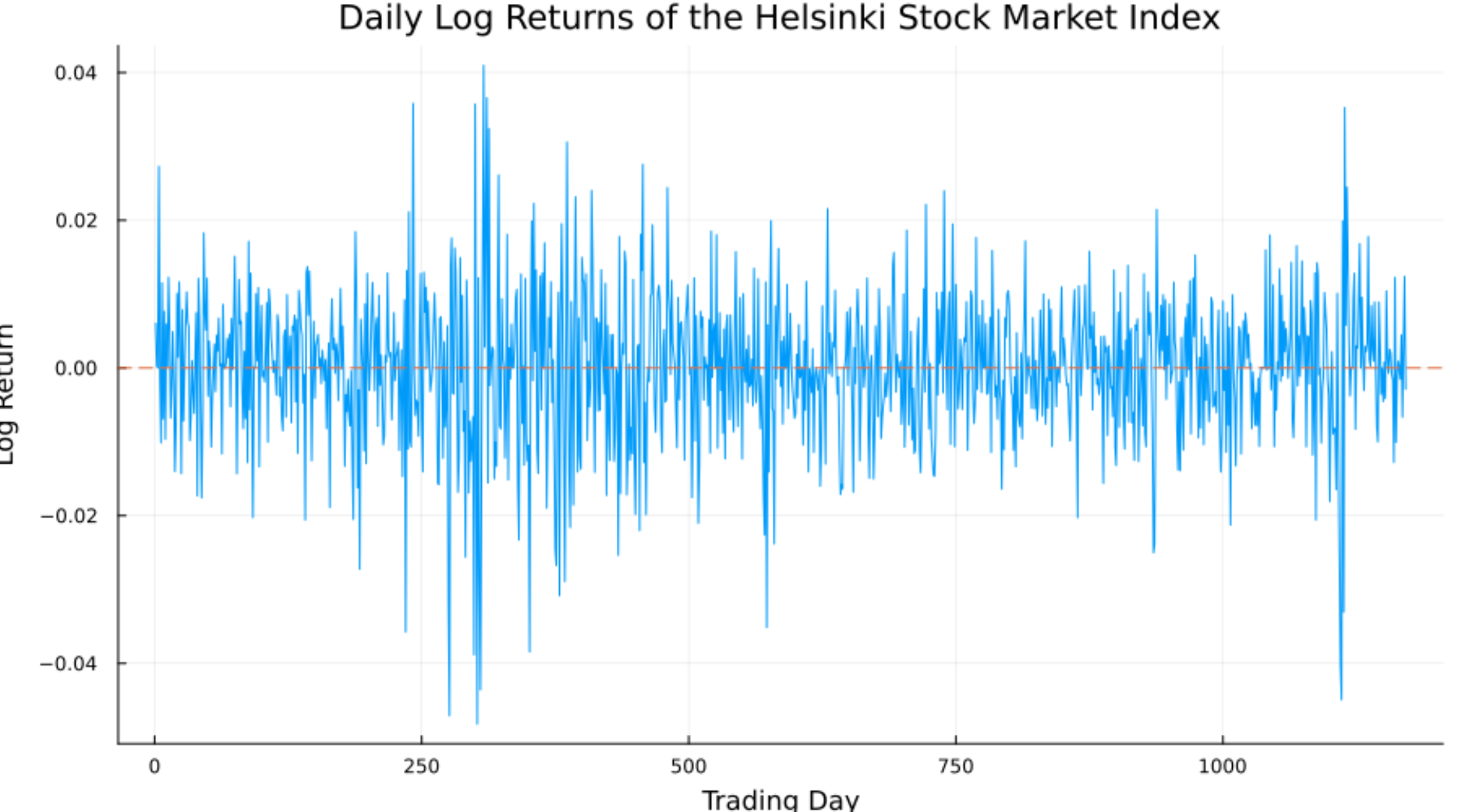}
\caption{Daily logarithmic returns of the Helsinki stock market index,
January 2021--June 2025.}
\label{fig:helsinki_returns}
\end{figure}
\section{Results and Simulation Study}

\subsection{Jump Detection and Parameter Estimation}

We implement MFMJD model on \(1171\) daily return observations of the Helsinki Stock Index spanning from January 2021 to June 2025, using annualized time steps (\(\Delta t = \frac{1}{252}\)). both Jump detection and Parameter estimation was done using EM algorithm MCMC with the help of Julia. A location is considered to have a jump if has 95 \% probability of Jumps according to the Julia code.  The estimated parameters are as follows:
\[
\mu = 2.3406 \times 10^{-2} , \quad 
\sigma^2_J = 3.6052  \times 10^{-2}, \quad 
\lambda = 0.8085,
\]
\[
\sigma^2 = 1.5729 \times 10^{-2}, \quad 
\nu^2 = 0.001398, \quad 
H = 0.76073, \mu_J = 1.9576 \times 10^{-2} 
\]
The estimated parameters, provide compelling evidence of long range dependency within the Nordic region during this macroeconomic cycle.
The estimated Hurst exponent (\(H = 0.7607\)) sits heavily above the classical random walk benchmark (\(H = 0.5\)), indicating strong long-range persistence and memory effects. This behavior matches the prolonged, trend-heavy cycles of the index as it adjusted to post-pandemic macro realignments and European trade disruptions. This result is consistent with the fractional dynamics introduced by Mandelbrot and Van Ness (1968). Crucially, our framework detected \(132\) distinct jump locations, resulting in an annualized jump intensity \(\lambda = 0.8085\). The conditional variance of these jumps (\(\sigma_j^2 = 0.0361\)) vastly exceeds the baseline continuous variance (\(\sigma^2 = 0.0157\)). This structural divergence proves that market risk over this window was heavily asymmetric, driven primarily by discrete, block-information shocks rather than continuous diffusion. The Figure~\ref{fig:jump_detection} shows the price path with detected jump location
\begin{figure}[H]
\centering
\includegraphics[width=0.9\linewidth]{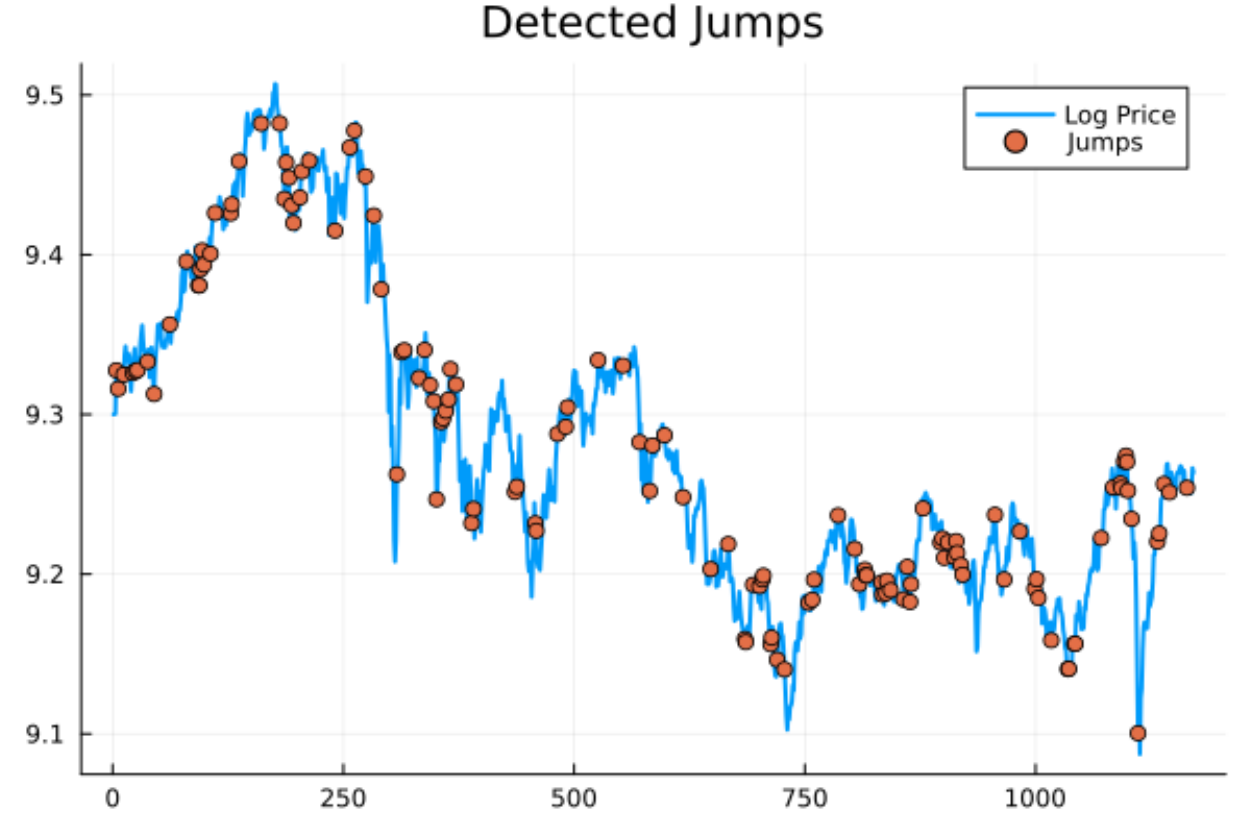}
\caption{Detected Jumps: Price path with detected jump location.}
\label{fig:jump_detection}
\end{figure}

\subsection{Model Fit}

To evaluate the generative power of the estimated Mixed Fractional Merton Jump Diffusion (MFMJD) parameters, we simulate 30 independent asset trajectories across a 1171-day horizon using the annualized empirical parameters. Figure~\ref{fig:model_fit} illustrates that the historical trajectory of the Helsinki Stock Index (solid black line) is centrally embedded within the simulated path envelope.This shows that the simulated trajectories clearly replicate the stylized facts of the empirical data. This confirms the effectiveness of combining long-memory processes \cite{Mandelbrot1968} with jump-diffusion dynamics \cite{Merton1976}. 
\begin{figure}[H]
\centering
\includegraphics[width=0.9\linewidth]{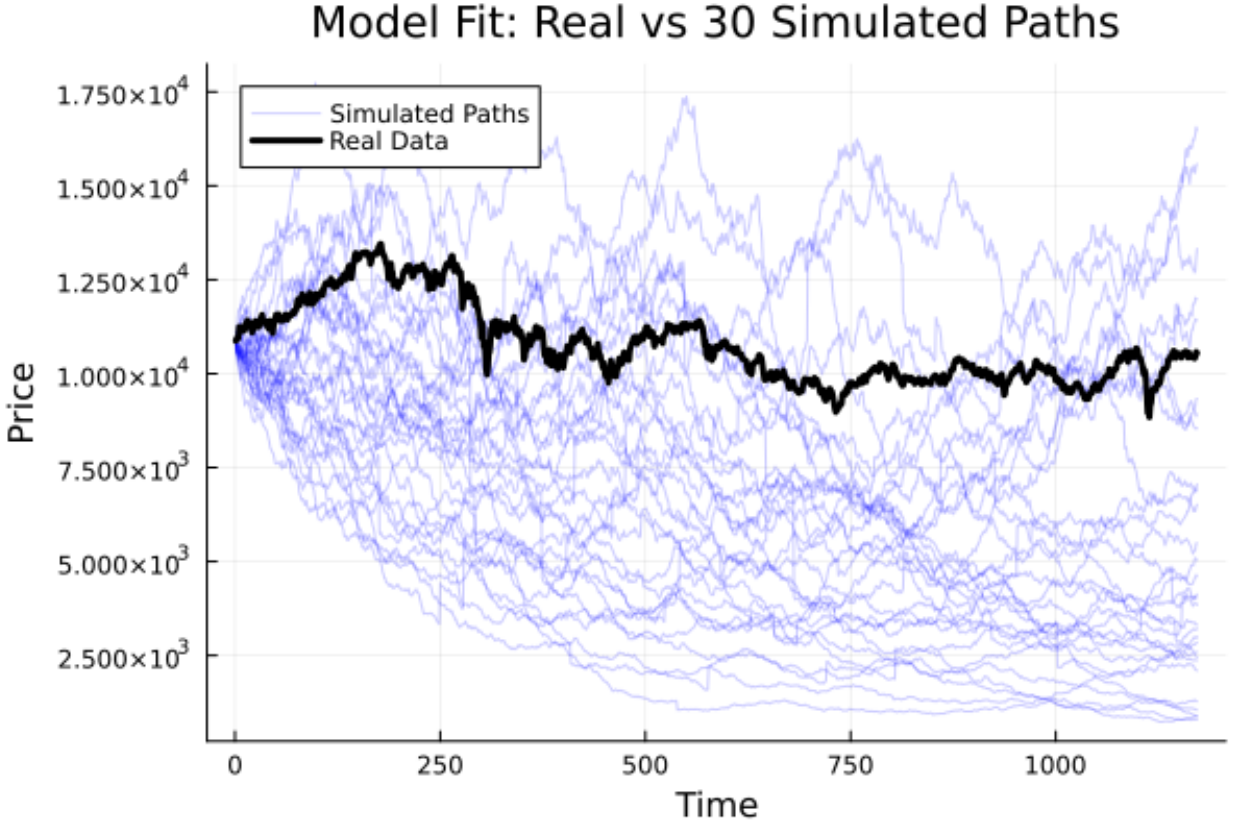}
\caption{Model fit: real path versus 30 simulated paths.}
\label{fig:model_fit}
\end{figure}

\subsection{Residual Analysis}

\paragraph{Distributional Properties.}
To assess the empirical robustness of our EM-MCMC parameter estimation, we analyze the resulting model residuals. The empirical distribution of the residuals demonstrates highly favorable diagnostic properties. The distribution is tightly centered around a mean of zero, confirming the absence of residual deterministic drift or systematic estimation bias.
Figure~\ref{fig:residual_hist} presents the residual histogram. 

\begin{figure}[H]
\centering
\includegraphics[width=0.8\linewidth]{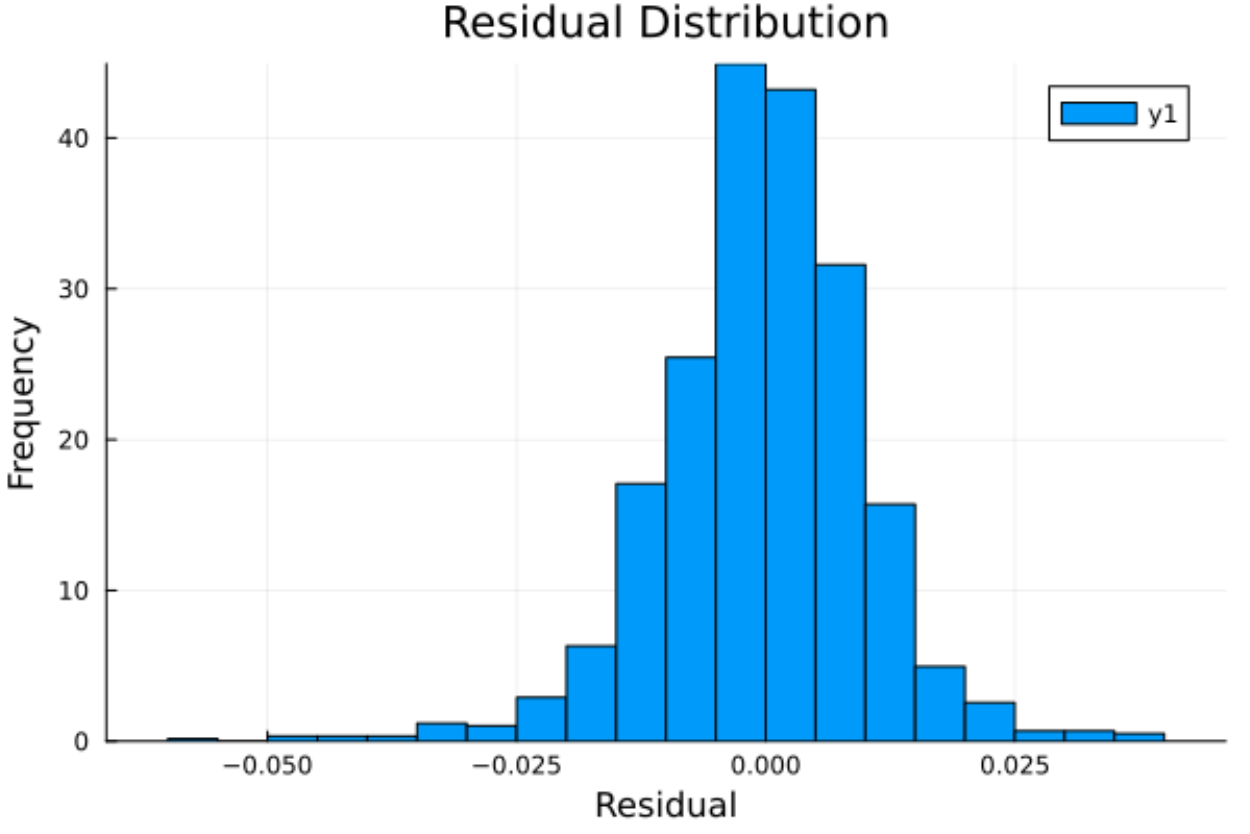}
\caption{Residual distribution.}
\label{fig:residual_hist}
\end{figure}

The distribution exhibits leptokurtic behavior and slight negative skewness, consistent with empirical findings in financial returns \cite{mandelbrot1963}. The presence of fat tails validates the inclusion of jump components, which capture extreme events beyond Gaussian assumptions.

\paragraph{Residual over time.}
To evaluate the temporal stability and independence of the model parameters,Figure~\ref{fig:residual_time} tracks the estimated residuals over the 1171-day trading horizon (January 2021 to June 2025).

\begin{figure}[H]
\centering
\includegraphics[width=0.9\linewidth]{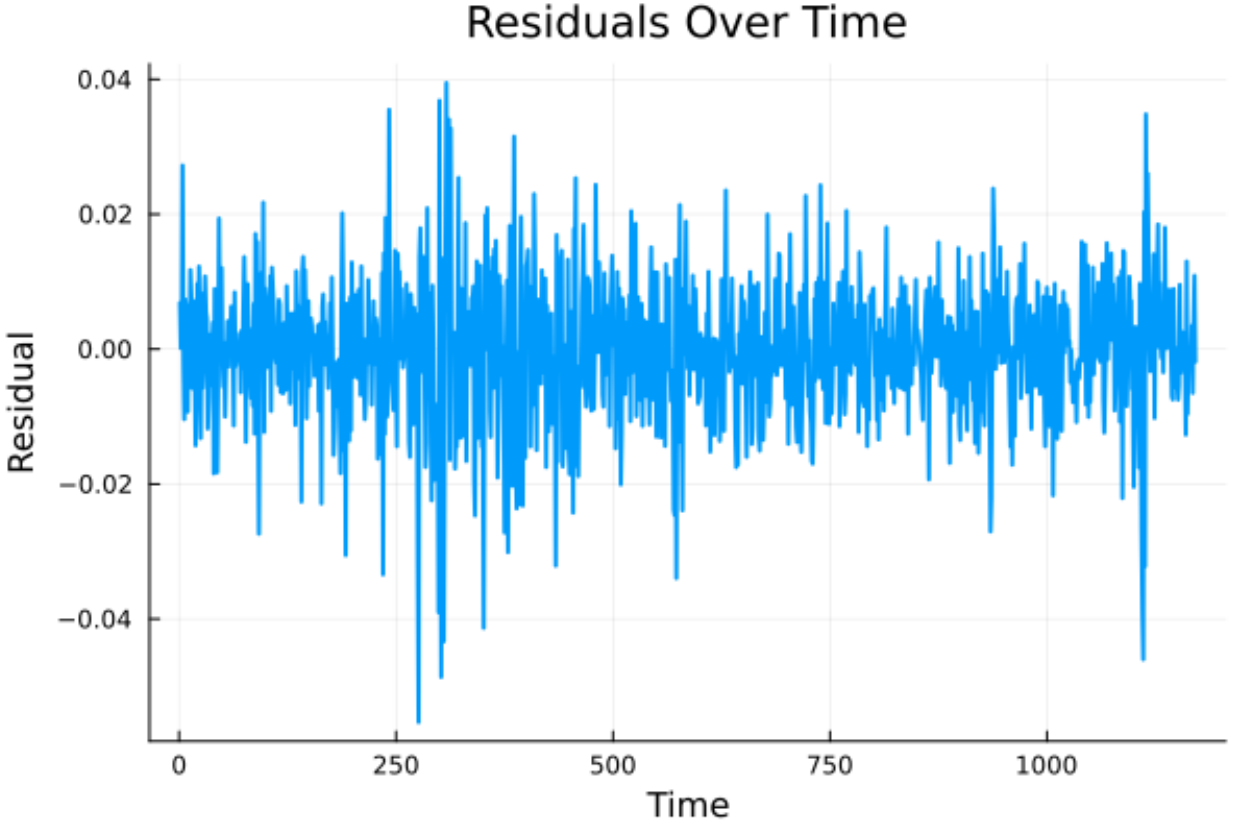}
\caption{Residuals over time.}
\label{fig:residual_time}
\end{figure}

The residuals exhibit structural stationarity, fluctuating uniformly around a zero mean baseline. This absence of persistent local trends confirms that the fractional drift architecture, parameterized by \(H = 0.7607\), successfully exhausted the long-range serial dependence inherent in the empirical index returns.

\subsection{Goodness-of-Fit}

The goodness of fit of the proposed model is evaluated using the
mean squared error (MSE) between the observed and simulated index
prices. Since multiple simulated paths are generated, the reported
measure is the average MSE across the $M$ simulated paths, defined as
\begin{equation}
\operatorname{Average\ MSE}
=
\frac{1}{M}
\sum_{j=1}^{M}
\left[
\frac{1}{n}
\sum_{t=1}^{n}
\left(
S_t^{\mathrm{real}}-S_{t,j}^{\mathrm{sim}}
\right)^2
\right],
\end{equation}
where $S_t^{\mathrm{real}}$ denotes the observed index price and
$S_{t,j}^{\mathrm{sim}}$ denotes the simulated price at time $t$ for
the $j$th simulated path.

Using the simulated ensemble, the resulting average MSE is
\[
\operatorname{Average\ MSE}
=6.42498\times10^{4}.
\]
Because the MSE is expressed in squared price units, its square root
provides a more interpretable measure of the typical pricing
deviation. The corresponding RMSE is approximately
\[
\operatorname{RMSE}
=\sqrt{6.42498\times10^{4}}
\approx253.48.
\]
The observed OMX25 prices range from approximately
9,135 to 13,800 over the sample period. Hence, an RMSE of about 253.48 index points represents a relatively small deviation compared
with the overall level of the index, corresponding to approximately
2.2\% relative to a representative price level near the centre of the observed range. This indicates that the simulated paths generated
from the estimated MFMJD parameters provide a reasonably close representation of the observed price dynamics.
The obtained MSE indicate that the fitted values produced by the proposed EM--MH estimation procedure closely match the observed log-return series. Combined with the model fit and residual diagnostics, this result suggests that the proposed MFMJD model provides an adequate representation of both continuous market dynamics and discontinuous jump behavior.

\section{Summary and suggestion for further studies }

The MFMJD model demonstrates strong empirical performance by integrating fractional dynamics with jump diffusion. The results confirm that the model captures long-range dependence, heavy tails, and discrete shocks, providing a robust framework for financial modeling consistent with established literature  \cite{Merton1976, mandelbrot1963, Mandelbrot1968}
Specifically, the MFMJD model provides a robust representation of the Helsinki Stock Index by integrating fractional dynamics with jump diffusion. The estimated parameters, simulation results,and residual analysis collectively confirm that the model captures both persistent behavior and extreme market movements, leaving residuals that behave as pure stochastic noise. The model can be applied to any stock market data which exhibits the same behaviour. 
\paragraph{} However,  since the empirical analysis in this study is restricted to one market data set "the Helsinki Stock Index". The researchers suggest a further research with Validation across multiple financial markets and asset classes to further strengthen the general applicability and robustness of the framework.
Also, since this research focuses mainly on empirical validation and establishment of facts, we suggest that further research should be conducted on forecasting and prediction of the stock prices.

\bibliographystyle{siam}
\bibliography{Biblog1}
\end{document}